\documentclass[a4paper,twocolumn,notitlepage,aps,pra,10pt]{revtex4-2}
\usepackage{bbm}
\usepackage{amsmath}
\usepackage{amssymb}
\usepackage{amsthm}
\usepackage{amsfonts}
\usepackage{amsmath}
\usepackage{graphicx}
\usepackage{xcolor}
\usepackage{csquotes}
\usepackage{braket}
\usepackage{setspace}
\usepackage{verbatim}
\usepackage[colorlinks=true,linkcolor=teal,citecolor=teal,urlcolor=teal]{hyperref}
\usepackage{mathtools}
\usepackage{booktabs}
\usepackage{braket}
\usepackage{mleftright}
\usepackage{siunitx}
\DeclareSIUnit\hartree{Ha}

\newtheorem{proposition}{Proposition}

\theoremstyle{definition}

\begin{document}

\title{Matrix product channel: Variationally optimized quantum tensor network\\ to mitigate noise and reduce errors for the variational quantum eigensolver}

\author{Sergey Filippov}
\email{sergey.filippov@algorithmiq.fi}
\author{Boris Sokolov}
\author{Matteo A. C. Rossi}
\author{Joonas Malmi}
\author{Elsi-Mari Borrelli}
\author{Daniel Cavalcanti}
\author{Sabrina Maniscalco}
\author{Guillermo Garc\'{\i}a-P\'{e}rez}
\affiliation{Algorithmiq Ltd, Kanavakatu 3 C, FI-00160 Helsinki, Finland}

\begin{abstract}
Quantum processing units boost entanglement at the level of hardware and enable physical simulations of highly correlated electron states in molecules and intermolecular chemical bonds. The variational quantum eigensolver provides a hardware-efficient toolbox for ground state simulation; however, with limitations in precision. Even in the absence of noise, the algorithm may result into a biased energy estimation, particularly with some shallower ansatz types. Noise additionally degrades entanglement and hinders the ground state energy estimation (especially if the noise is not fully characterized). Here we develop a method to exploit the quantum-classical interface provided by informationally complete measurements to use classical software on top of the hardware entanglement booster for ansatz- and noise-related error reduction. We use the tensor network representation of a quantum channel that drives the noisy state toward the ground one. The tensor network is a completely positive map by construction, but we elaborate on making the trace preservation condition local so as to activate the sweeping variational optimization. This method brings into reach energies below the noiseless ansatz by creating additional correlations among the qubits and denoising them. Analyzing the example of the stretched water molecule with a tangible entanglement, we argue that a hybrid strategy of using the quantum hardware together with the classical software outperforms a purely classical strategy provided the classical parts have the same bond dimension. As a byproduct we discuss the expressivity of matrix product channels and address the overfitting problem emerging in postprocessing actual measurement data. The proposed optimization algorithm extends the variety of noise mitigation methods and facilitates the more accurate study of the energy landscape for deformed molecules. The algorithm can be applied as the final postprocessing step in the quantum hardware simulation of protein–ligand complexes in the context of drug design.
\end{abstract}

\maketitle

\section{Introduction} \label{section-introduction}
The variational quantum eigensolver (VQE)~\cite{peruzzo-2014,mcclean-2016} has the potential to demonstrate the first useful advantage of near-term quantum devices and advance quantum chemistry~\cite{elfving-2020,mcardle-2020,cerezo-2021,bharti-2022,fedorov-2022,anand-2022,tilly-2022}. In particular, VQE is expected to be useful in calculation of large scale protein–ligand interaction energies~\cite{malone-2022,kirsopp-2022} and protein folding~\cite{robert-2021}. A vista is to describe new biomolecules~\cite{marchetti-2022} and rethink small-molecule drug discovery~\cite{schneider-2020} from a quantum perspective. Existing methods of classical computation struggle in describing highly entangled multipartite quantum states in quantum chemistry and even deep neural network approaches can deal with at most about 30 electrons \cite{hermann-2020}. The VQE uses an actual quantum processor to physically prepare a quantum state approximating the ground state of a given Hamiltonian, thus playing a role of an entanglement booster. However, the noise levels in existing quantum processors create one significant roadblock for using VQE to gain useful quantum advantage over classical algorithms.

Unfortunately, the fault-tolerant error-correcting quantum computers are currently unavailable and would require lots of ancillary physical qubits for exploring relevant quantum chemistry problems~\cite{goings-2022}. Other hardware-based noise mitigation methods require (ideal) quantum operations on multiple copies of the VQE state~\cite{koczor-2021,huggins-2021} and may be hard to implement for many qubits. Focusing on the very near-term quantum computing, we therefore have no other option but to consider software-based noise mitigation strategies and address the challenging problem of how to combine the noisy quantum hardware and the classical postprocessing in the most efficient way. Here we present one of several approaches incorporated in the Algorithmiq's proprietary quantum chemistry platform \emph{Aurora} to mitigate noise and reduce errors.

Noise degrades coherences in the VQE output and generally corrupts its entanglement, and the purpose of the classical postprocessing is to mitigate those effects. Many existing proposals are listed in Ref.~\cite{tilly-2022}, with some of them assuming a specific noise model (the probalisitic error cancellation for local noise~\cite{temme-2017,li-2017}, the non-local noise inversion for shallow circuits~\cite{guo-2022}, the depolarizing noise inversion~\cite{vovrosh-2021}) and some of them eliminating an unknown qubit-local Markovian noise by extrapolating the results for several noise strengths to get the noise-free estimation~\cite{endo-2018}. In the most optimistic scenario, in which the noise is completely mitigated, there is still a problem of the VQE ansatz not being precise, i.e., a deviation from the ground state energy is inevitable. This is particularly important for simulating molecules with deformed and stretched bonds as they are known to typically have a high degree of entanglement~\cite{boguslawski-2012,boguslawski-2013,molina-espiritu-2015} and, therefore, require more complicated VQE ans\"{a}tze (needless to mention, there are limitations of the classical methods to deal with such molecules). However, studying the energy landscape of deformed molecules is a cornerstone for determining the rates of chemical reactions and analyzing the binding affinity of protein–ligand complexes. The progress in this direction can dramatically reduce the cost of a \emph{de novo} drug design~\cite{malone-2022,seo-2021}.

There is a clear physical intuition about how to reduce the built-in VQE ansatz error and drive a noiseless $N$-qubit VQE ansatz state $\ket{\psi_{\rm VQE}}$ closer to the actual $N$-qubit ground state $\ket{\psi_{0}}$: this can be done by a proper $2^N \times 2^N$ unitary operator $U$ provided this operator can be efficiently optimized classically --- an assumption fulfilled in the tensor network representation~\cite{cirac-2017,luchnikov-2021,haghshenas-2022,rudolph-2022}. With the use of informationally complete measurements, the idea of optimization with classical software after creating entanglement on a quantum hardware is realizable also in the noisy scenario with a replacement of the unitary tensor network by a tensor network of completely positive and trace preserving maps \cite{vilma}. This generalization enables one both to reduce the VQE ansatz errors and to mitigate low-intensity noise by driving the mixed density operator at the VQE output closer to the pure ground state. This method makes it possible to go below the noise-free VQE ansatz energy and for this reason deserves further attention. The previously proposed tensor network for the global completely positive map has a ladder topology \cite{vilma}, which partially restricts its expressiveness. In this paper we propose to use a completely positive tensor network map of another topology, where the Kraus operators are effectively represented by the matrix product operators. This topology is known in the literature as the locally-purified density operator~\cite{werner-2016,torlai-2020} and, as we show, it yields a more expressive set of maps, provided the bond dimension is fixed. 

The essential roadblock for the variational optimization of the tensor network with topology of the locally-purified density operator is the trace-preservation condition that involves all constituent tensors and makes it nearly impossible to modify a single-qubit tensor without violating the trace preservation condition for the whole map~\cite{srinivasan-2021}. In Refs.~\cite{torlai-2020} and \cite{guo-2022}, the \emph{ad hoc} solution was to introduce the additional penalty in the cost function so as to penalize a deviation from the trace preservation. However, this approach still results in an error even after convergence~\cite{torlai-2020}. As insignificant as this error can be for the purposes of the quantum channel tomography it is generally too high and unpredictable for this approach to be used in solving the quantum chemistry problems, where high accuracy is needed. 

Here, we develop an alternative strategy to the trace preservation resulting in a linear equation for a single-qubit tensor. The linearity enables us to incorporate this condition in the semidefinite programming problem~\cite{BVbook,SCbook} for a given single-qubit tensor, upon solving which we can switch to another qubit and so on --- a `sweeping' optimization envisioned in Ref.~\cite{srinivasan-2021}. Therefore, we eliminate the trace-preservation roadblock for this tensor network with topology of the locally-purified density operator (that defines a completely positive map by construction), and for the sake of brevity we refer to this map as a \emph{matrix product channel} (MPC). 

By variationally optimizing MPC, we demonstrate a proof-of-principle simulation of how the classical postprocessing on top of the hardware entanglement booster can facilitate mitigating noise and reducing errors in the VQE. We show that the ``quantum+classical'' strategy is advantageous as compared to the purely ``classical'' strategy based on the conventional density matrix renormalization group (DMRG) \cite{schollwock-2011} estimation of the ground energy with the same bond dimension as in the MPC.

The paper is organized as follows. In Sec.~\ref{section-noise-and-errors}, we briefly review the ansatz-related and noise-related errors in the VQE as well as desribe the classical postprocessing strategy in general. In Sec.~\ref{section-mpc}, the MPC tensor network structure is outlined. In Sec.~\ref{section-expressibility}, we address the issue of how expressive the MPC tensor network is. In Sec.~\ref{section-tp} we elaborate the trace preservation condition for the MPC. In Sec.~\ref{section-variational-algorithm}, the MPC variational optimization algorithm is given. In Sec.~\ref{section-results}, we apply the developed algorithm and amend the VQE estimation of the ground energy for the illustrative example of the stretched water molecule. In Sec.~\ref{section-conclusions}, brief conclusions are given. 

\section{Noise and errors in the variational quantum eigensolver} \label{section-noise-and-errors}

\begin{figure}[b]
    \centering
    \includegraphics[width = 8.5cm]{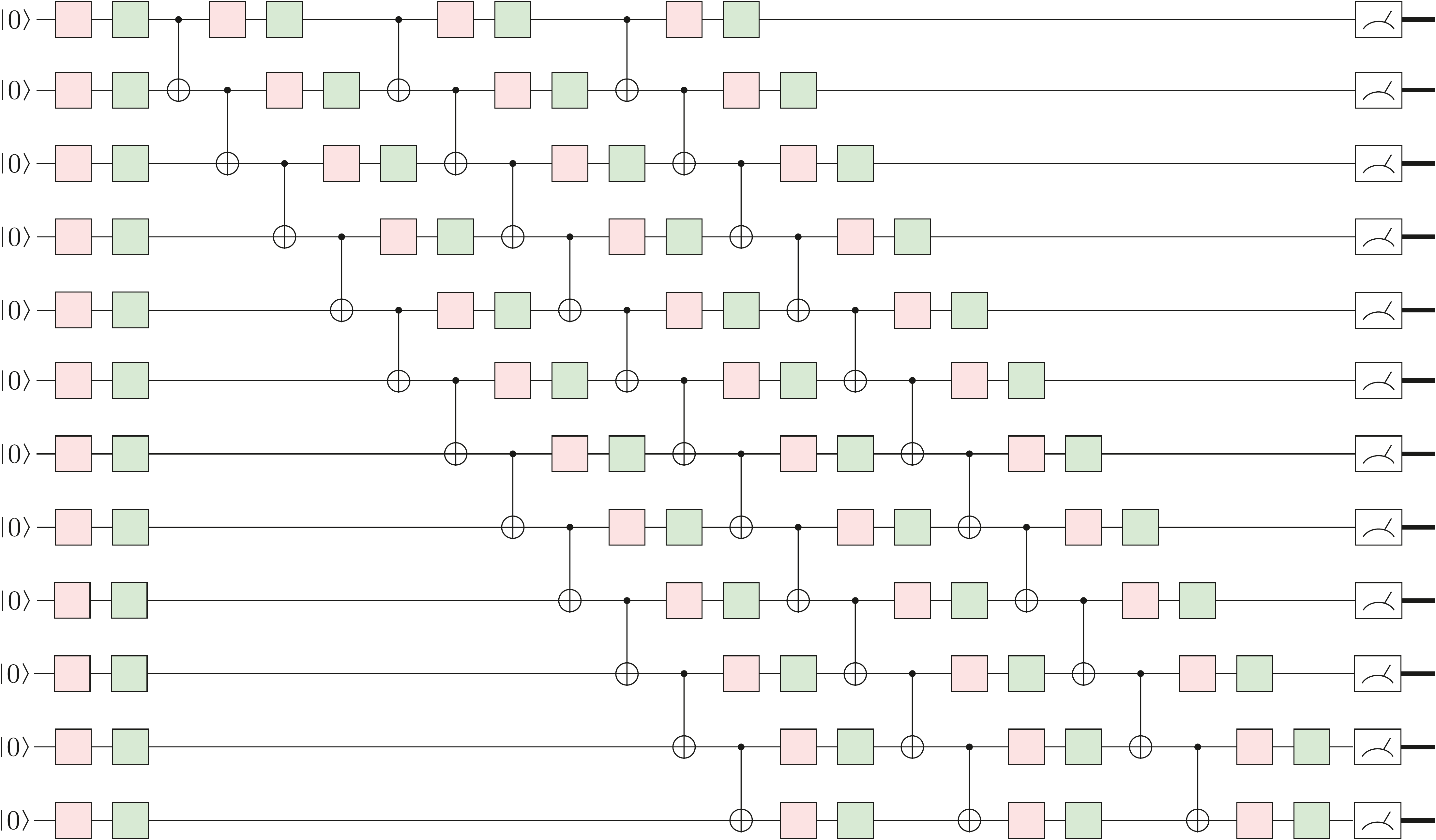}
    \caption{Hardware-efficient VQE ansatz with 12 qubits simulating the ground state of the stretched ${\rm H}_2{\rm O}$ molecule. Pink and light-green squares correspond to single-qubit conventional rotations around $y$ and $z$ axes, respectively.}
    \label{figure-circuit}
\end{figure}

The VQE's applicability to quantum chemistry tasks is most straightforward as it aims at preparing a close analogue $\ket{\psi_{\rm VQE}}$ to the ground state $\ket{\psi_{0}}$ for a given effective Hamiltonian $H$ describing a selected number of spin-orbitals~\cite{kandala-2017,nam-2020}. In this study, we do not consider the approximation error in defining the effective Hamiltonian and assume that this error is negligible as compared to the error of the VQE ansatz, 
\begin{equation}
\epsilon_{\rm VQE} = \bra{\psi_{\rm VQE}} H \ket{\psi_{\rm VQE}} - \bra{\psi_{0}} H \ket{\psi_{0}} > 0.
\end{equation}

\noindent Fig.~\ref{figure-circuit} illustrates a simple quantum circuit that provides a hardware-efficient~\cite{kandala-2017} $12$-qubit state $\ket{\psi_{\rm VQE}}$ approximating the ground state $\ket{\psi_0}$ of the stretched ${\rm H}_2{\rm O}$ molecule modelled by a 6-molecular-orbital Hamiltonian $H$.

If current quantum computers were noiseless, $\epsilon_{\rm VQE}$ would be the only estimation error for the ground state energy; however, since all the quantum gates in the circuit as well as state preparations and measurements are generally imperfect, we actually deal with a mixed state $\varrho \neq \ket{\psi_{\rm VQE}}\bra{\psi_{\rm VQE}}$ so that the deviation 
\begin{equation}
\epsilon_{\rm noise} = {\rm tr}[\varrho H] - \bra{\psi_{\rm VQE}} H \ket{\psi_{\rm VQE}}
\end{equation}

\noindent may significantly contribute to the total error (especially in the case of deep quantum circuits). There are several approaches addressing the latter type of noise-induced error~\cite{tilly-2022,endo-2021}; however, only a few approaches are capable of amending the first one, e.g., the Lanczos algorithm~\cite{suchsland-2021} and the symmetry constraints on the Hamiltonian $H$, which \emph{per se} enable one to implement a low-cost error mitigation~\cite{endo-2021,bonet-monroig-2018}. Omitting the symmetry considerations for the sake of universality, we are left with the important problem of how to simultaneously diminish both types of errors.

In an actual experiment one has no access to the states $\ket{\psi_{\rm VQE}}$ or $\ket{\psi_0}$ but deals with the only information available, namely, the measurement outcomes (see Figs.~\ref{figure-circuit} and \ref{figure-setup}). If the measurements themselves are informationally complete, then a modest number of shots (that does not scale exponentially in the number of qubits) is still enough to estimate the energy~\cite{garcia-perez-2021}, with the estimate being unbiased, i.e., on average we get the error $\epsilon_{\rm VQE} + \epsilon_{\rm noise}$. The idea of this paper is to present a further development of the classical algorithm~\cite{vilma} that processes the measurement outcomes in such a way that the error in energy estimation can be reduced from $\epsilon_{\rm VQE} + \epsilon_{\rm noise}$ to below $\epsilon_{\rm VQE}$, i.e., not only the noise is mitigated but also the ansatz itself is corrected. 

\begin{figure}
    \centering
    \includegraphics[width = 8cm]{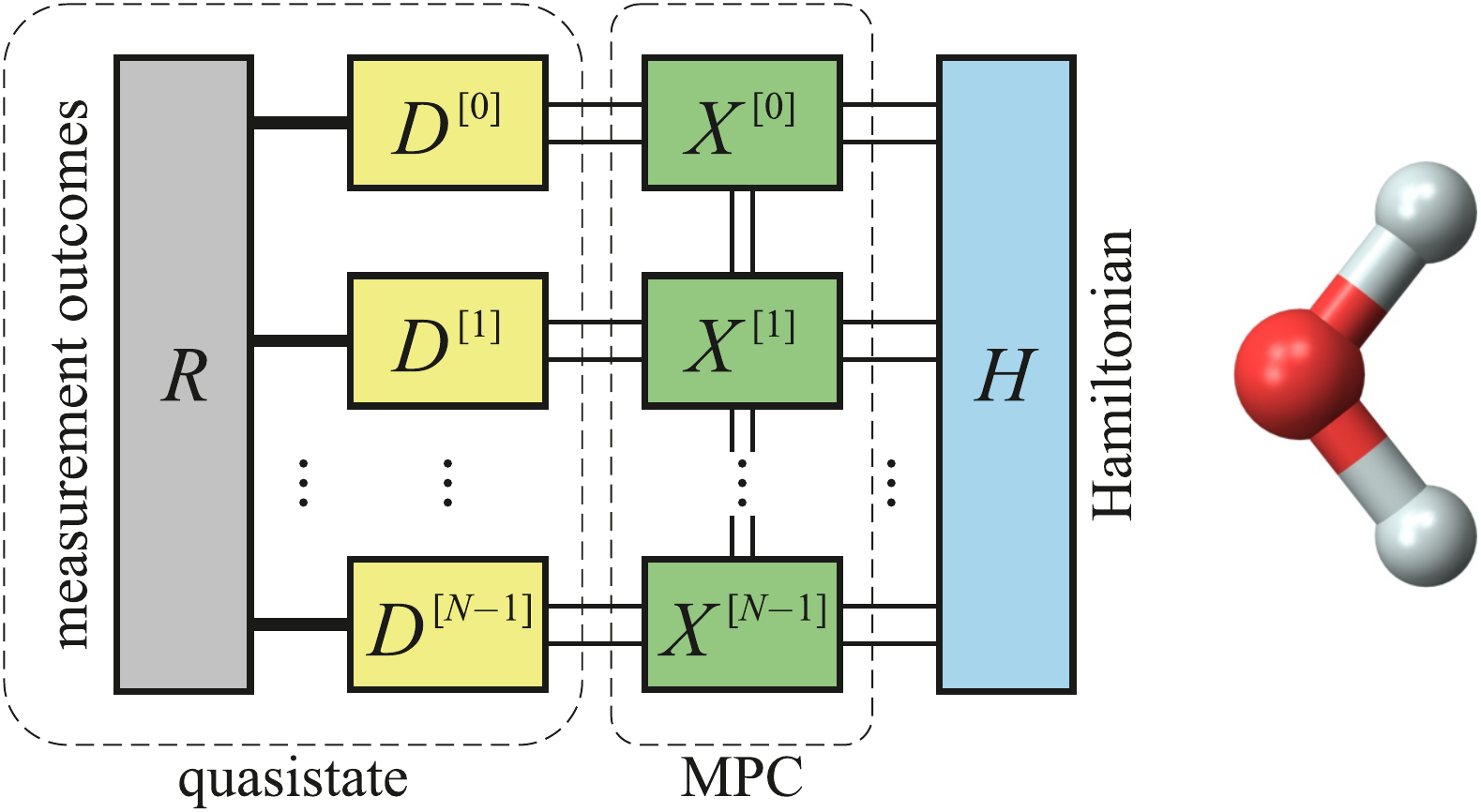}
    \caption{Classical postprocessing of measurement outcomes with a matrix product channel (MPC) to get a better estimation of the molecule ground-state energy.}
    \label{figure-setup}
\end{figure}

The key components of the postprocessing task are presented in Fig.~\ref{figure-setup}. $R$ is a tensor with a finite collection of $S$ measurement outcomes; $D^{[m]}$ is a tensor encoding a set of dual operators for the effects in the informationally complete measurement of $m$th qubit (enumerated from $0$ to $N-1$). The contraction of all tensors $R$ and $\{D^{[m]}\}_{m=0}^{N-1}$ yields an estimate $\varrho$ of the state (which we refer to as the quasistate) that becomes an actual density operator (describing the VQE ansatz affected by noise) in the limit of infinitely many shots. MPC is a tensor network describing a completely positive and trace preserving map $\Phi$ to be variationally optimized so as to minimize the energy ${\rm tr}\{\Phi[\varrho]H\}$, where $H$ is the system effective Hamiltonian. 

Importantly, the postprocessing algorithm that we develop does not assume the noise to be known and, therefore, is beyond recently discussed limitations of quantum error mitigation~\cite{quek-2022}. By analyzing the example of the stretched ${\rm H}_2{\rm O}$ molecule we also justify that the advantage in mitigating noise and reducing the overall energy estimation error down to $\epsilon_{\rm q+cl}$ is indeed obtained by a collaborative effort of the quantum entanglement booster and the classical postprocessing in the form of an optimized MPC tensor network (with a fixed classical bond dimension $r=2$). The classical DMRG simulation of the ground state with the same bond dimension $r=2$ (performed alone without the entanglement booster in the form of the VQE) is not able to reach the same accuracy in the energy estimation and leads to an error $\epsilon_{\rm cl} > \epsilon_{\rm q+cl}$.

\section{Matrix product channel} \label{section-mpc}

Operators and maps that are widely used in quantum information science take an impressively clear geometrical form in the tensor network formalism. Consider a $d$-dimensional quantum system and an orthonormal basis $\{\ket{i}\}_{i=0}^{d-1}$ in the associated Hilbert space. A density operator $\varrho$ is defined by its matrix elements $\bra{i} \varrho \ket{i'}$ that altogether form a rank-2 tensor $\varrho_{ii'}$, which is nothing else but the positive semidefinite matrix with unit trace. To distinguish ket-indices $\{i\}_{i=0}^{d-1}$ from bra-indices $\{i'\}_{i'=0}^{d-1}$ in the tensor diagram for $\varrho = \sum_{i,i'} \varrho_{ii'} \ket{i} \bra{i'}$ we mark ket- and bra-legs of a tensor by outgoing and incoming arrows, respectively. The connected legs are summed over, so the introduced arrows also indicate the matrix multiplication order. The trivial diagrams for $\varrho$ and ${\rm tr}[A \varrho B]$ are shown in Figs.~\ref{figure-diagrams}(a) and \ref{figure-diagrams}(b). 

\begin{figure*}
    \centering
    \includegraphics[width = 14cm]{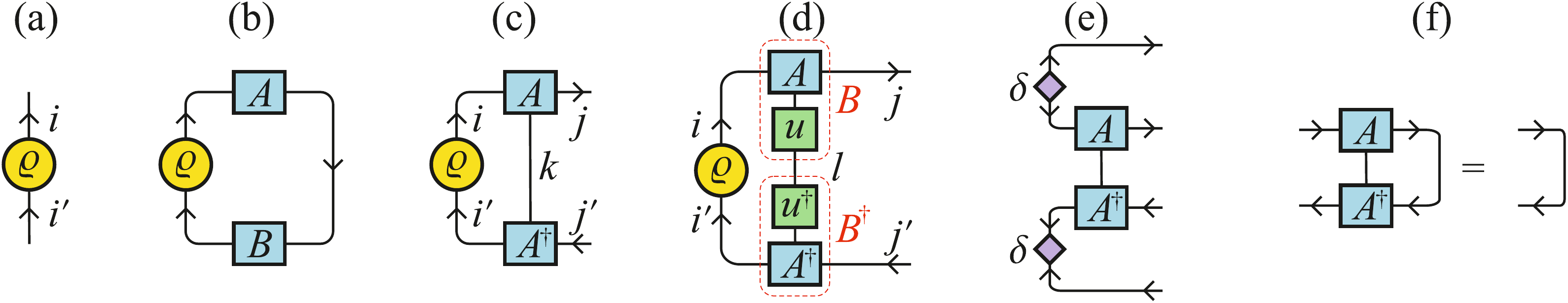}
    \caption{Basic tensor diagrams involving a density operator and a quantum channel. (a) Density operator. (b) Rank-0 tensor ${\rm tr}[A \varrho B]$. (c) Quantum channel output $\sum_{k} A_k \varrho A_{k}^{\dag}$. (d) Nonuniqueness of the Kraus operators. (e) Choi operator. (f) Trace preservation condition.}
    \label{figure-diagrams}
\end{figure*}

\begin{figure*}
    \centering
    \includegraphics[width = 18cm]{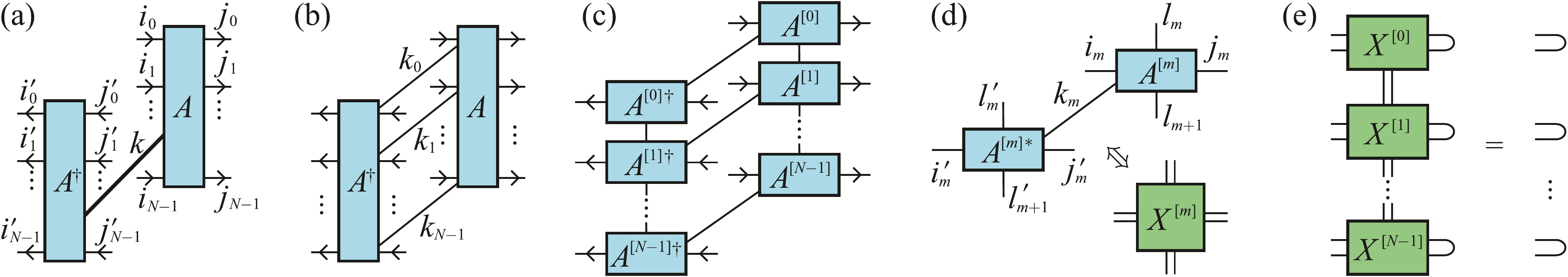}
    \caption{Matrix product channel deduction. (a) Multipartite Kraus operators. (b) Splitting the Kraus bond index into a multiindex. (c) Matrix product representation of the Kraus operators. (d) Core of the matrix product channel. (e) Trace preservation condition.}
    \label{figure-mpc}
\end{figure*}

A completely positive map $\Phi$ adopts the diagonal-sum representation $\Phi [\varrho] = \sum_{k=0}^{\widetilde{K}-1} A_k \varrho A_k^{\dag}$~\cite{holevo-2019}. A collection of $\widetilde{K}$ Kraus operators, $\{A_k\}_{k=0}^{\widetilde{K}-1}$, can be viewed as a rank-3 tensor $A_{ijk}$ with two indices $i$ and $j$ of physical dimension $d$ and one index $k$ of the so-called Kraus dimension $\widetilde{K}$~\cite{torlai-2020,guo-2022}. A tensor diagram for $\Phi [\varrho]$ is depicted in Fig.~\ref{figure-diagrams}(c). Some basic properties such as non-uniqueness of the Kraus operators immediately follow from this tensor network representation. Indeed, let $u$ be an isometric matrix such that $u^{\dag} u = I$. Incorporating the very identity operator into the connected line for Kraus indices as shown in Fig.~\ref{figure-diagrams}(d) and contracting tensors $A$ and $u$, we get new Kraus operators $B_l = \sum_k u_{lk} A_k$ for the same map $\Phi$, i.e., $\Phi [\varrho] = \sum_{l} B_l \varrho B_l^{\dag}$. Figure \ref{figure-diagrams}(e) shows a tensor diagram for the Choi operator $\Phi \otimes {\rm Id} [\ket{\delta}\bra{\delta}]$, where ${\rm Id}$ stands for the identity transformation and $\ket{\delta} = \sum_{i = 0}^{d} \ket{ii}$ is an unnormalized maximally entangled state. Considering an eigendecomposition of the Choi operator, one can readily see that the number of Kraus operators can always be decreased to be less than or equal to $d^2$. If the considered completely positive map $\Phi$ additionally enjoys the trace preservation property $\sum_{k=0}^{\widetilde{K}-1} A_k^{\dag} A_k = I$, then $\Phi$ is a quantum channel. In terms of the dual map $\Phi^{\dag}$ (defined through the mathematical identity ${\rm tr}\big[ \Phi^{\dag}[X]Y \big] = {\rm tr} \big[X \Phi[Y] \big]$ for all $X$, $Y$) the trace preservation condition reads $\Phi^{\dag}[I] = I$ meaning the dual map is unital. A tensor diagram for the trace preservation condition is shown in Fig.~\ref{figure-diagrams}(f).

Having at hand the tensor network representation of a quantum channel for a single $d$-dimensional quantum system, we can now readily generalize it to a multipartite scenario with $N$ quantum systems. We use the subscript $m = 0, \ldots, N-1$ in physical indices $i_m$ and $j_m$ to address individual quantum systems. In our diagrams, the constituent systems are stacked vertically (see Fig.~\ref{figure-mpc}) so the index $m$ enumerates \textit{floors} associated with the systems. A general Kraus operator $A_{i_0 \ldots i_{N-1} j_0 \ldots j_{N-1} k}$ is a tensor of rank $2N+1$ as it is shown in Fig.~\ref{figure-mpc}(a); however, the total number of Kraus operators can be as big as $d^{2N}$. Reshaping the Kraus index $k$ into a multiindex $(k_0 \ldots k_{N-1})$, where the dimension of each subindex $k_m$ is $d^2$, we get a $3N$-rank tensor $A_{i_0 \ldots i_{N-1} j_0 \ldots j_{N-1} k_0 \ldots k_{N-1}}$, see Fig.~\ref{figure-mpc}(b). Exploiting the full analogy with the matrix product states and the matrix product operators~\cite{schollwock-2011,orus-2014,cirac-2021}, we rewrite the obtained Kraus operator in the following matrix product form:
\begin{equation} \label{matrix-product-Kraus-operator}
    A_{i_0 \ldots i_{N-1} j_0 \ldots j_{N-1} k_0 \ldots k_{N-1}} = A_{i_0 j_0 k_0}^{[0]} \cdots A_{i_{N-1} j_{N-1} k_{N-1}}^{[{N-1}]}.
\end{equation}
Fig.~\ref{figure-mpc}(c) illustrates the resulting tensor diagram for a multiparte quantum channel which we refer to as a \emph{matrix product channel} (MPC). The channel output $\Phi[\varrho]$ and the Choi operator for the channel are known in the literature as a locally-purified density operator~\cite{werner-2016,torlai-2020}.

We denote by $l_m$ the virtual indices in between the tensors $A_{i_{m-1} j_{m-1} k_{m-1}}^{[m-1]}$ and $A_{i_m j_m k_m}^{[m]}$, see Fig.~\ref{figure-mpc}(d). Although the matrix product decomposition~\eqref{matrix-product-Kraus-operator} always exists, in the worst-case scenario the bond dimension $\vert \{l_m\} \vert$ grows exponentially with the increase of $m$ as $\min(d^{4m},d^{4(N-m)})$. Artificially restricting the bond dimension to take some fixed value $r \ll \min(d^{4m},d^{4(N-m)})$, we reduce the class of simulable completely positive maps on the one hand but make the tensor network computationally efficient on the other hand. The efficacy originates from the fact that we use rank-5 tensors $A_{i_m j_m l_m l_{m+1} k_m}^{[m]}$ and their complex conjugated versions $A_{i'_m j'_m l'_m l'_{m+1} k_m}^{[m]\ast}$ that are contracted along the indices of small dimensions (independent of $N$). Therefore, the complexity of such a description grows linearly with the number of systems $N$. Moreover, the map constructed in such a way is automatically completely positive as it adopts the diagonal-sum representation. 

The efficacy of the described tensor network was reported previously for simulating open quantum many-body systems with near-neighbour Hamiltonian and dissipator terms~\cite{werner-2016} as well as in quantum tomography~\cite{torlai-2020}. The matter is that a relatively small bond dimension $r$ is enough in many physically relevant situations thanks to the local nature of interactions. In this paper, we provide an affirmative answer to the question whether the constructed multipartite map with a small bond dimension $r$ is able to mitigate noise and reduce errors for the variational quantum eigensolver. 

\begin{figure*}
    \centering
    \includegraphics[width = 15cm]{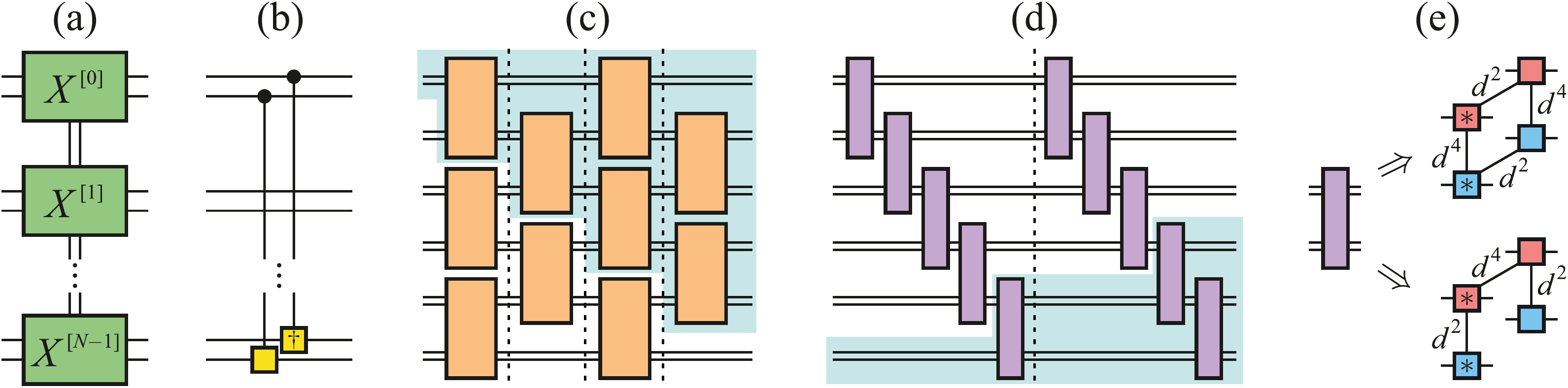}
    \caption{Matrix product channel vs local tensor networks. (a) Matrix product channel. (b) Example of the controlled unitary gate. (c) Brickwall tensor network of quantum channels. Dotted vertical lines indicate layers. Causal cones are marked by shaded areas. (d) Ladder tensor network of quantum channels. (e) Decompositions of a bipartite quantum channel. Dimensions of connected lines are explicitly given.}
    \label{figure-other-maps}
\end{figure*}

The appealing beauty of the built-in complete positivity is overshadowed by the trace preservation condition $\Phi^{\dag}[I^{\otimes N}] = I^{\otimes N}$ which becomes notoriously difficult to deal with~\cite{srinivasan-2021}. (This resembles an interplay of complexities for the complete positivity condition and the trace preservation condition in memoryless and memory-kernel master equations: whenever one of the conditions is easy to verify, the other one turns out to be a challenge~\cite{chruscinski-2022}.) The matter is that the trace preservation condition involves all the subsystems as a whole [see Fig.~\ref{figure-mpc}(e)] so that even if we modify $m$th party tensors $A_{i_m j_m l_m l_{m+1} k_m}^{[m]}$ and $A_{i'_m j'_m l'_m l'_{m+1} k_m}^{[m]\ast}$ only (for a fixed $m$), then we not only disturb the identity operator at the $m$-th floor of the diagram but also potentially disturb identity operators for all other $N-1$ floors. In fact, a roadblock for the variational optimization is how the tensors at one floor of the network may be varied without violating the trace preservation constraint~\cite{srinivasan-2021}. In Refs.~\cite{torlai-2020} and~\cite{guo-2022}, the ad-hoc solution was to introduce the additional penalty in the cost function; however, this approach may still lead to the error $\frac{1}{\sqrt{d^N}} \| \Phi^{\dag}[I] - I \|_2 \approx 0.1$ after convergence~\cite{torlai-2020} which is insufficient for quantum chemistry problems. In Sec.~\ref{section-tp} we move away that roadblock by reformulating the trace preservation constraint in the form of a linear condition on the tensor $X_{i_m j_m l_m l_{m+1} i'_m j'_m l'_m l'_{m+1}}^{[m]} = \sum_{k_m} A_{i_m j_m l_m l_{m+1} k_m}^{[m]} A_{i'_m j'_m l'_m l'_{m+1} k_m}^{[m]\ast}$ and reducing the optimization problem to a semidefinite programming problem for the positive semidefinite matrix $X_{(i_m j_m l_m l_{m+1}), (i'_m j'_m l'_m l'_{m+1})}^{[m]}$.

\section{Expressivity of matrix product channels} \label{section-expressibility}

By ${\cal E}_{{\rm MPC}}(r,K)$ denote the set of channels attainable within the matrix product channel ansatz with the bond dimension $r$ and the Kraus dimension $K$ (per single subsystem). Clearly, the inclusion relation ${\cal E}_{{\rm MPC}}(r,K) \subseteq {\cal E}_{{\rm MPC}}(r',K')$ takes place if $r \leq r'$ and $K \leq K'$. It is worth noticing that the bond dimension $r$ and the Kraus dimension $K$ in a matrix product channel are not totally independent. A too large Kraus dimension can always be reduced to a smaller one if the bond dimension is small. In fact, the Kraus dimension at an $m$th floor can be made as small as the rank of the matrix $X_{(i_m j_m l_m l_{m+1}), (i'_m j'_m l'_m l'_{m+1})}^{[m]}$, which in turn is bounded from above by $d^2 r^2$. The reshaped eigenvectors of the matrix $X_{(i_m j_m l_m l_{m+1}), (i'_m j'_m l'_m l'_{m+1})}^{[m]}$ are the renewed tensors $\{\widetilde{A}_{i_m j_m l_m l_{m+1} \widetilde{k_m}}^{[m]}\}_{\widetilde{k_m} = 0}^{{\rm rank} X^{[m]}  - 1}$ enabling a more efficient representation. Therefore, we have just proved the following result.
\begin{proposition} \label{prop-decrease-Kraus}
${\cal E}_{{\rm MPC}}(r,K) \subseteq {\cal E}_{{\rm MPC}}(r,d^{2}r^{2})$ for all $K$. 
\end{proposition}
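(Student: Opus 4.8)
The plan is to show that the MPC map $\Phi$ depends on each floor tensor $A^{[m]}$ \emph{only} through the combination $X^{[m]}_{i_m j_m l_m l_{m+1} i'_m j'_m l'_m l'_{m+1}} = \sum_{k_m} A^{[m]}_{i_m j_m l_m l_{m+1} k_m} A^{[m]\ast}_{i'_m j'_m l'_m l'_{m+1} k_m}$ introduced above, and then to argue that this combination can always be realized with a Kraus dimension no larger than $d^2 r^2$ while keeping the bond dimension $r$ fixed. First I would inspect the contraction producing $\Phi[\varrho] = \sum_{k_0 \ldots k_{N-1}} A_k \varrho A_k^\dagger$. Because of the matrix-product form~\eqref{matrix-product-Kraus-operator}, each Kraus subindex $k_m$ appears in exactly one floor tensor $A^{[m]}$ in the top copy and in one conjugate $A^{[m]\ast}$ in the bottom copy; summing over $k_m$ therefore glues these two tensors locally at floor $m$ and yields precisely $X^{[m]}$. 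Consequently, two MPC tensor networks with identical $\{X^{[m]}\}_{m=0}^{N-1}$ define the same channel $\Phi$, irrespective of the underlying Kraus dimension.

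Next I would view $X^{[m]}$ as a matrix with row multiindex $(i_m j_m l_m l_{m+1})$ and column multiindex $(i'_m j'_m l'_m l'_{m+1})$. By construction it equals $A^{[m]} (A^{[m]})^\dagger$, where $A^{[m]}$ is reshaped into a matrix with row multiindex $(i_m j_m l_m l_{m+1})$ and column index $k_m$; hence $X^{[m]}$ is positive semidefinite, and its row/column dimension is $d^2 r^2$, so $\mathrm{rank}\, X^{[m]} \le d^2 r^2$. Taking the spectral decomposition of $X^{[m]}$ over its positive eigenvalues and absorbing the square roots of these eigenvalues into the corresponding eigenvectors, I obtain renewed tensors $\widetilde{A}^{[m]}_{i_m j_m l_m l_{m+1} \widetilde{k}_m}$ indexed by $\widetilde{k}_m \in \{0,\ldots,\mathrm{rank}\, X^{[m]} - 1\}$ that reproduce the same $X^{[m]}$. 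Because the row multiindex already carries the virtual legs $l_m, l_{m+1}$ of dimension $r$, reshaping the eigenvectors back into rank-five tensors does not alter the bond dimension: the new network is again an MPC with bond dimension $r$ but with Kraus dimension at most $d^2 r^2$ per floor.

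Putting the pieces together, for an arbitrary channel in $\mathcal{E}_{\rm MPC}(r,K)$ this construction produces an equal channel in $\mathcal{E}_{\rm MPC}(r, d^2 r^2)$, establishing the claimed inclusion for every $K$. The one point requiring care -- and hence the step I would verify most explicitly -- is that the local refactorization preserves the bond structure: one must confirm that the virtual indices $l_m, l_{m+1}$ of dimension $r$ remain untouched by the spectral decomposition of $X^{[m]}$, since they sit \emph{inside} the row and column multiindices rather than being contracted or summed away. I do not expect any genuine difficulty here, only bookkeeping, because the positive-semidefinite factorization acts entirely on the matrix index of $X^{[m]}$ and never mixes the physical-plus-virtual multiindex with the Kraus index. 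Once this bond invariance and the rank bound $\mathrm{rank}\, X^{[m]} \le d^2 r^2$ are in hand, the invariance of $\Phi$ under replacing $A^{[m]}$ by any factorization of $X^{[m]}$ delivers the result at once.
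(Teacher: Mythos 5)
Your proposal is correct and follows essentially the same route as the paper: the channel depends on each floor only through the positive semidefinite core $X^{[m]}$, whose rank is at most $d^2 r^2$, so its eigendecomposition yields renewed Kraus tensors with Kraus dimension at most $d^2 r^2$ while the virtual legs (and hence the bond dimension $r$) are untouched. Your explicit verification that the factorization acts only on the Kraus index and leaves the bond structure intact is a welcome elaboration of the step the paper states more briefly.
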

Proposition~\ref{prop-decrease-Kraus} shows that the Kraus dimension $K$ is less important in the analysis of expressivity than the bond dimension $r$. The set ${\cal E}_{{\rm MPC}}(d^{2N},d^2)$ coincides with the set of all $N$-partite channels. Imposing the restriction $r < d^{2N}$ we narrow the set of simulable maps. The natural question of how the set ${\cal E}(r,K)$ is related with other tensor network constructions proposed earlier. 

Consider a brickwall tensor network of quantum channels acting on two adjacent subsystems, see Fig.~\ref{figure-other-maps}(c). By ${\cal E}_{{\rm brick}}(L)$ denote a set of simulable $N$-partite channels in such a tensor network with $L$ layers. Then the following relation holds.
\begin{proposition} \label{prop-expressiveness-brickwall}
${\cal E}_{{\rm brick}}(L) \subseteq {\cal E}_{{\rm MPC}}(d^{L},d^{2L})$ if $L$ is even. ${\cal E}_{{\rm brick}}(L) \subseteq {\cal E}_{{\rm MPC}}(d^{L+1},d^{2(L+1)})$ and ${\cal E}_{{\rm brick}}(L) \subseteq {\cal E}_{{\rm MPC}}(d^{L+3},d^{2L})$ if $L$ is odd. 
\end{proposition}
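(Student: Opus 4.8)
The plan is to bound the expressivity of a brickwall network by explicitly converting it into a matrix product channel, tracking how the bond dimension $r$ and Kraus dimension $K$ accumulate as tensors are contracted floor-by-floor. First I would recall the structure from Fig.~\ref{figure-other-maps}(c): each layer consists of two-qubit channels acting on adjacent subsystems, with even and odd layers offset by one floor so as to tile the brickwall pattern. Each two-qubit channel can itself be written in diagonal-sum form $\sum_k A_k \varrho A_k^\dag$, and following the bipartite decomposition in Fig.~\ref{figure-other-maps}(e), a single two-qubit channel admits a matrix product representation across the two floors it couples, with a virtual bond of dimension at most $d^2$ on each of the ket- and bra-sides (hence contributing a factor up to $d^2$ to the product) and a local Kraus index of dimension $d^2$.

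\medskip

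The key step is the contraction bookkeeping. The strategy is to group the $L$ brickwall layers into the MPC floor structure and count, for a horizontal cut between floor $m$ and floor $m+1$, how many virtual lines cross it. Each layer that straddles this cut contributes a bounded-dimension line, and since the brickwall is geometrically two-local, only the layers whose two-qubit gates sit on the cut matter. For $L$ even, the even/odd alternation means that across any cut exactly $L$ gate-bonds pass through, each of dimension at most $d$ after the bipartite split of Fig.~\ref{figure-other-maps}(e), giving a combined bond dimension $r \leq d^L$; simultaneously, each of the $L$ layers contributes a Kraus factor $d^2$, so the Kraus dimension per floor is bounded by $d^{2L}$. This yields the inclusion ${\cal E}_{\rm brick}(L) \subseteq {\cal E}_{\rm MPC}(d^L, d^{2L})$ for even $L$.

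\medskip

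For $L$ odd the alignment is imperfect: the topmost (or bottommost) layer does not pair up symmetrically, so one extra gate-bond crosses certain cuts. I would handle this in two ways, matching the two stated bounds. One option is to absorb the dangling layer by padding, which raises the effective layer count to $L+1$ and gives $r \leq d^{L+1}$, $K \leq d^{2(L+1)}$, the first odd bound. The alternative is to keep the Kraus budget at $d^{2L}$ (not inflating it) at the cost of splitting the odd gate less efficiently across the cut, absorbing the mismatch into the virtual bond and thereby paying $r \leq d^{L+3}$, which is the second odd bound. Verifying these two distinct trade-offs requires care, since one must confirm that no single cut simultaneously inherits the worst case of both the bond and the Kraus accounting.

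\medskip

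The main obstacle I anticipate is the precise worst-case counting of virtual lines across a cut in the odd-$L$ case, because the brickwall offset creates cuts of two different types (those passing through a gate versus those passing between gates), and one must take the maximum over all floors $m$ while keeping the ket/bra doubling of indices straight. Once Proposition~\ref{prop-decrease-Kraus} is invoked to justify that any excess Kraus dimension can be trimmed, the essential content reduces to this geometric line-count, and the two odd-$L$ bounds should follow from the two natural ways of charging the unpaired layer either to $K$ or to $r$.
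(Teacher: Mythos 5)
Your overall strategy --- decompose each two-qubit channel into a two-floor tensor, contract within floors, and track how the bond and Kraus dimensions accumulate --- is the same as the paper's, and your padding argument for the first odd-$L$ bound is legitimate. However, the quantitative bookkeeping, which is the entire content of the proposition, contains two errors that happen to cancel for even $L$. First, the crossing count is wrong: in a brickwall only the gates of \emph{alternating} layers straddle a given horizontal cut (a gate on the pair $(m+1,m+2)$ contributes nothing to the cut between floors $m$ and $m+1$), so at most $\lceil L/2\rceil$ gates cross any cut, not $L$. Second, the per-gate bond cannot be taken to be $d$: already a generic two-qubit \emph{unitary} has operator Schmidt rank $d^2$, so even its single Kraus operator requires a virtual bond of dimension $d^2$ between its two floors. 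The correct count for even $L$ is $L/2$ straddling gates, each contributing $d^2$, giving $r=(d^2)^{L/2}=d^L$; your ``$L$ bonds of dimension $d$'' reproduces this number by accident (and is inconsistent with your own earlier figure of $d^2$ per gate, which combined with $L$ crossings would give $d^{2L}$). Applied to odd $L$, your count would yield $r\leq d^L$, which is below what the construction actually achieves ($d^{L+1}$) --- a symptom that the counting is not the right one.

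The deeper gap is the assumption that a single bipartite channel can simultaneously be decomposed with a small bond ($\leq d^2$) \emph{and} a balanced Kraus index of dimension $d^2$ on each floor. If that were possible, the odd case would admit the single bound $(d^{L+1},d^{2L})$, which implies both bounds stated in the proposition --- the very fact that two incomparable odd-$L$ bounds are given signals a trade-off. The two decompositions in Fig.~\ref{figure-other-maps}(e) are inequivalent precisely in this respect: one achieves bond $d^2$ but attaches the full Kraus index (dimension $d^4$) to a single floor, the other splits the Kraus index as $d^2$ per floor at the price of bond $d^4$. The even-$L$ bound and the first odd bound use the former for every gate (each floor then receives a $d^4$ Kraus factor only from the $\lceil L/2\rceil$ gates that assign their Kraus index to it, giving $d^{2L}$, resp.\ $d^{2(L+1)}$); the second odd bound uses the latter for the last layer only, paying $d^4$ instead of $d^2$ in bond ($d^{L-1}\cdot d^4=d^{L+3}$) to keep the per-floor Kraus at $d^{2(L-1)}\cdot d^2=d^{2L}$. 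Your plan gestures at ``splitting the odd gate less efficiently,'' but without identifying this explicit bond--Kraus trade-off for a single bipartite channel, the exponents $L+3$ and $2L$ cannot be derived.
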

\begin{proof}
Each two-partite channel can be decomposed down to the tensor networks shown in the bottom part of Fig.~\ref{figure-other-maps}(e). Contracting the elementary tensors within floors, we get a matrix product channel with the stated parameters $r$ and $K$. The second estimation in the case of odd $L$ results from decomposing the last layer as it is shown in the top part of Fig.~\ref{figure-other-maps}(e). 
\end{proof}

Consider a ladder tensor network of quantum channels acting on two adjacent subsystems, see Fig.~\ref{figure-other-maps}(d). By ${\cal E}_{{\rm ladder}}(L)$ denote a set of simulable $N$-partite channels in such a tensor network with $L$ ladders. Then the following relation holds.
\begin{proposition} \label{prop-expressiveness-ladder}
${\cal E}_{{\rm ladder}}(L) \subseteq {\cal E}_{{\rm MPC}}(d^{2L},d^{4L})$. 
\end{proposition}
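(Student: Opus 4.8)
The plan is to follow the same recipe used in the proof of Proposition~\ref{prop-expressiveness-brickwall}: decompose every elementary bipartite channel of the ladder into two single-floor tensors, contract all tensors sitting on a common floor into one MPC core $A^{[m]}$, and then read off the resulting bond and Kraus dimensions by counting how many elementary bonds cross each inter-floor cut and how many Kraus legs pile up on each floor. Complete positivity is automatic throughout, since contracting and composing locally-purified (diagonal-sum) pieces again produces a diagonal-sum representation, so only the dimension bookkeeping has to be controlled.

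First I would invoke the bipartite decomposition of Fig.~\ref{figure-other-maps}(e): an arbitrary channel acting on two adjacent $d$-dimensional subsystems splits into two floor tensors joined by a single bond of dimension $d^2$, with each of the two floor tensors carrying a Kraus leg of dimension $d^2$ (total Kraus dimension $d^4$). I would read the ladder of Fig.~\ref{figure-other-maps}(d) as $L$ successive nearest-neighbour staircase sweeps, each sweep applying the adjacent channels $C_{0,1}, C_{1,2}, \ldots, C_{N-2,N-1}$ in sequence, and apply this decomposition to every channel in every sweep.

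The bond count is the crux and rests on a causal/locality argument. Within a single sweep, the cut separating floors $\{0,\ldots,m\}$ from $\{m+1,\ldots,N-1\}$ is straddled by exactly one elementary channel, namely $C_{m,m+1}$: the channels applied before it act only on the left block and those applied after it act only on the right block, so all correlation flowing across the cut is funnelled through this single channel and contributes a bond of dimension $d^2$. Stacking $L$ sweeps multiplies these independent contributions, giving bond dimension $(d^2)^L = d^{2L}$ across every cut. For the Kraus count, each interior floor $m$ is touched by the two channels $C_{m-1,m}$ and $C_{m,m+1}$ per sweep, hence by $2L$ channels over the $L$ sweeps; accumulating a $d^2$ Kraus leg from each gives a per-floor Kraus dimension $(d^2)^{2L} = d^{4L}$ (boundary floors, touched once per sweep, only reach $d^{2L} \le d^{4L}$, so the worst case is the interior value $d^{4L}$).

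Contracting, for each floor, the at most $2L$ elementary single-floor tensors into one core $A^{[m]}$ then yields a matrix product channel with bond dimension $d^{2L}$ and Kraus dimension $d^{4L}$, establishing ${\cal E}_{{\rm ladder}}(L) \subseteq {\cal E}_{{\rm MPC}}(d^{2L},d^{4L})$. The step I expect to be delicate is the bond accounting under contraction: one must verify that routing the $L$ sweep-bonds along the chain never produces a cut crossed by more than $L$ elementary bonds, i.e., that the sequential staircase geometry confines the information flow across each cut to the single straddling channel per sweep. Once that locality argument is secured, the dimension multiplication under composition is routine, and if a tighter Kraus bound were wanted, Proposition~\ref{prop-decrease-Kraus} could be applied afterwards to trim the Kraus dimension below the stated $d^{4L}$.
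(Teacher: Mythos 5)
Your proposal is correct and takes essentially the same route as the paper: the paper's proof is a one-line reduction observing that a single ladder layer has the tensor-network topology of two brickwall layers and then invokes Proposition~\ref{prop-expressiveness-brickwall}, while you unroll that same reduction into an explicit per-cut bond count ($L$ straddling channels $\times\, d^2$) and per-floor Kraus count ($2L$ touching channels $\times\, d^2$), arriving at the same $d^{2L}$ and $d^{4L}$. The locality argument you flag as delicate is exactly the content of the ladder-equals-two-brickwall-layers observation, and it holds for the staircase geometry.
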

\begin{proof}
From the viewpoint of deriving the matrix product channel through decompositions and contractions of individual channels, a single ladder layer is equivalent to two brickwall layers.
\end{proof}

To fairly compare the expressivity of a matrix product channel with that of a brickwall tensor network and a ladder tensor network we should stick to the same resulting bond dimension $r$ for all these tensor networks. Suppose $r = d^{2L}$ and $L \ll N$. In this case neither of the sets ${\cal E}_{{\rm MPC}}(d^{2L},d^{4L+2})$, ${\cal E}_{{\rm brick}}(2L)$, ${\cal E}_{{\rm ladder}}(L)$ covers the set of all $N$-partite channels. From Propositions~\ref{prop-expressiveness-brickwall} and \ref{prop-expressiveness-ladder} it follows that ${\cal E}_{{\rm MPC}}(d^{2L},d^{4L+2}) \supseteq {\cal E}_{{\rm brick}}(2L)$ and ${\cal E}_{{\rm MPC}}(d^{2L},d^{4L+2}) \supseteq {\cal E}_{{\rm ladder}}(L)$; however, we still need to explore if the sets coincide or not. To shed light on this question consider a controlled-unitary map ${\rm CU}_{m_1 m_2}$ between to distant subsystems $m_1$ and $m_2$ which leaves all other subsystems unaffected [see Fig.~\ref{figure-other-maps}(b) for the case $m_1 = 0$ and $m_2 = N-1$ and keep in mind a controlled-unitary counterpart with exchanged controlling and controlled subsystems]. Clearly, ${\rm CU}_{m_1 m_2} \in {\cal E}_{{\rm MPC}}(d,1) \subseteq {\cal E}_{{\rm MPC}}(d^{2L},d^{4L+2})$ [cf. Figs.~\ref{figure-other-maps}(a) and \ref{figure-other-maps}(b)]. However, ${\rm CU}_{m_1 m_2} \not\in {\cal E}_{{\rm brick}}(2L)$ if $L < |m_1 - m_2|$ because the causal cone of the $m_1$-th system does not contain the $m_2$-th system and vice versa. Similarly, ${\rm CU}_{m_1 m_2} \not\in {\cal E}_{{\rm ladder}}(L)$ if $L < |m_1 - m_2|$. Therefore, ${\cal E}_{{\rm MPC}}(d^{2L},d^{4L+2}) \supseteq {\cal E}_{{\rm brick}}(2L)$ and ${\cal E}_{{\rm MPC}}(d^{2L},d^{4L+2}) \supseteq {\cal E}_{{\rm ladder}}(L)$ which means that the matrix product channels are more expressive in generating different maps than the brickwall or ladder tensor networks. Continuing the same line of arguing, we come to the conclusion that the matrix product channels are more expressive than any tensor network composed of local maps (acting nontrivially on several adjacent systems) provided the bond dimensions are the same. The conclusion remains valid even if one replaces networks with open boundary conditions by those with periodic boundary conditions.

Suppose an MPC with the bond dimension $r_{\rm MPC}$ is applied to a pure matrix product state with the bond dimension $r_{\rm MPS}$ or a mixed state in the form of the density matrix product operator~\cite{verstraete-2004,zwolak-2004} with the bond dimension $r_{\rm MPDO}$ per ket and bra side of the network. Then the application of the MPC also results in the MPDO tensor network~\cite{werner-2016} with the bond dimension being $r_{\rm MPC} r_{\rm MPDO}$ ($r_{\rm MPC} r_{\rm MPS}$ if the input state is pure). The advantage of the MPC originates in the ability to add correlations and purify them. 

If the MPC is applied to a factorized input state ($r_{\rm MPDO} = 1$), then we actually deal with a purely classical optimization, where all correlations among qubits are built by the MPC. In fact, a partial case of the MPC is a trash-and-prepare channel $\Phi[\varrho] = {\rm tr}[\varrho] \ket{\psi_{\rm MPS}} \bra{\psi_{\rm MPS}}$, where the output pure matrix product state inherits the bond dimension of the MPC. If this is the case, the MPC tensors take the form $X^{[m]}_{i_m j_m l_m l_{m+1} i'_m j'_m l'_m l'_{m+1}} = \delta_{i_m i'_m} B^{[m]j_m}_{l_m l_{m+1}} (B^{[m]j'_m}_{l'_m l'_{m+1}})^{\ast}$, where the tensors $B^{[m]j_m}_{l_m l_{m+1}}$ define the output matrix product state. Therefore, the MPC alone is expressive to cover a purely classical optimization too; however, it is more effective to use the MPC on top of the hardware entanglement booster.

\section{Trace preservation constraint} \label{section-tp}

It is the necessity to preserve the trace that makes the variational optimization of MPC challenging. Despite the fact that the MPC tensor network defines a completely positive map by construction, if the trace is not preserved, then we can get either a zero estimate for the ground energy (provided the actual ground energy is strictly positive) or an unbounded negative estimate (provided the actual ground energy is negative but finite). It is straightforward to formulate some necessary (but not sufficient) or sufficient (but not necessary) floor-specific conditions for the MPC tensor network to be a trace preserving map; however, the known necessary and sufficient condition for trace preservation is floor-nonlocal~\cite{srinivasan-2021} which prevents the possibility to modify a single-floor tensor $X^{[m]}$ without violating the trace preservation unless exponentially many requirements are met (equivalent to the matrix equation $\Phi^{\dag}[I^{\otimes N}] = I^{\otimes N}$). Here we propose an alternative approach leading to a floor-local criterion for the trace preservation.

\begin{proposition} \label{prop-identity}
A Hermitian operator $F$ acting on the Hilbert space of $N$ systems each of dimension $d$ satisfies the inequality ${\rm tr}[F^2] - \frac{1}{d^N}({\rm tr}[F] )^2 \geq 0$ and the equality holds if and only if $F = f I^{\otimes N}$ for some $f \in \mathbb{R}$.
\end{proposition}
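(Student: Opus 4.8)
The plan is to recognize the quantity ${\rm tr}[F^2] - \frac{1}{d^N}({\rm tr}[F])^2$ as a manifestation of the Cauchy--Schwarz inequality in the space of operators equipped with the Hilbert--Schmidt inner product $\langle A, B \rangle = {\rm tr}[A^{\dag} B]$. I would apply Cauchy--Schwarz to the pair consisting of $F$ and the identity $I^{\otimes N}$ on the $d^N$-dimensional total Hilbert space.

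First I would record the three relevant inner products. Since $F = F^{\dag}$, one has $\langle F, F \rangle = {\rm tr}[F^2]$ and $\langle F, I^{\otimes N} \rangle = {\rm tr}[F]$, the latter being real precisely because $F$ is Hermitian; moreover $\langle I^{\otimes N}, I^{\otimes N} \rangle = {\rm tr}[I^{\otimes N}] = d^N$. The Cauchy--Schwarz inequality $|\langle F, I^{\otimes N} \rangle|^2 \leq \langle F, F \rangle \, \langle I^{\otimes N}, I^{\otimes N} \rangle$ then reads $({\rm tr}[F])^2 \leq d^N \, {\rm tr}[F^2]$, which is exactly the claimed bound after dividing through by $d^N$.

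For the equality case I would invoke the standard saturation condition for Cauchy--Schwarz: the bound is tight if and only if the two operators are linearly dependent, i.e., $F = f \, I^{\otimes N}$ for some scalar $f$. Hermiticity of $F$ forces $f \in \mathbb{R}$, and conversely any such $F$ plainly equalizes both sides, which establishes the stated ``if and only if''.

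As a sanity check I would also mention the equivalent spectral route: diagonalizing the Hermitian $F$ into real eigenvalues $\lambda_0, \ldots, \lambda_{d^N - 1}$, the expression becomes $\sum_n \lambda_n^2 - \frac{1}{d^N}(\sum_n \lambda_n)^2$, which is $d^N$ times the empirical variance of the spectrum and hence nonnegative, vanishing exactly when all $\lambda_n$ coincide --- again the condition $F = f \, I^{\otimes N}$. I do not anticipate any genuine obstacle here, as the result is elementary; the only points demanding minor care are verifying that ${\rm tr}[F]$ is real so that the scalar $f$ is real, and checking both directions of the equality condition, both of which follow immediately from Hermiticity.
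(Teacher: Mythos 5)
Your proof is correct, but it takes a genuinely different route from the paper's. The paper normalizes $F$ to reduce the claim to the familiar purity bound ${\rm tr}[\widetilde{F}^2]\geq \frac{1}{d^N}$ for density operators, which forces a case analysis: ${\rm tr}[F]=0$, then $F$ positive semidefinite, then the general Hermitian case handled by passing to $|F|=\sqrt{F^2}$ and using ${\rm tr}[|F|]\geq |{\rm tr}[F]|$ together with an argument about the signature of the eigenvalues to pin down the equality case. Your Cauchy--Schwarz argument in the Hilbert--Schmidt inner product (equivalently, the variance-of-the-spectrum computation you give as a sanity check) dispatches both the inequality and the saturation condition in one stroke, with no case splitting and no detour through $|F|$; the only point needing care --- that ${\rm tr}[F]$ is real so the proportionality constant $f$ is real --- is exactly the one you flag. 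Your approach is shorter and arguably cleaner; the paper's has the minor virtue of connecting the quantity to the physically familiar purity of a quantum state, which motivates the later interpretation of $V^{[m]}$ as a variance, but mathematically the two establish the same statement and yours involves less bookkeeping.
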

\begin{proof}
If ${\rm tr}[F] = 0$, then the equality is apparently true and the equality takes place if and only if $F$ is a null operator, i.e., $f = 0$. Suppose ${\rm tr}[F] \neq 0$ and $F$ is positive semidefinite, then $\widetilde{F} = ({\rm tr}[F])^{-1} F$ is a density operator. The purity parameter ${\rm tr}[\widetilde{F}^2]$ of a density operator satisfies the inequality ${\rm tr}[\widetilde{F}^2] \geq \frac{1}{d^N}$ and the equality takes place if and only if $\widetilde{F}$ is a maximally mixed state, i.e., $\widetilde{F} = \frac{1}{d^N} I^{\otimes N}$ justifying the proposition statement in this case. If ${\rm tr}[F] \neq 0$ and $F$ is not positive semidefinite, then the positive semidefinite operator $|F| = \sqrt{F^2}$ satisfies the inequality ${\rm tr}[|F|^2] - \frac{1}{d^N}({\rm tr}[|F|] )^2 \geq 0$. Noticing that ${\rm tr}[|F|^2] = {\rm tr}[F^2]$ and ${\rm tr}[|F|] \geq |{\rm tr}[F]|$, we get ${\rm tr}[F^2] - \frac{1}{d^N}({\rm tr}[F] )^2 \geq 0$. The equality takes place if and only if both ${\rm tr}[|F|] = |{\rm tr}[F]|$ and $|F| = f I^{\otimes N}$ for some $f \geq 0$, which implies that all eigenvalues of $F$ have the same signature and $F = \pm f I^{\otimes N}$.
\end{proof}

Consider an MPC $\Phi$ defined by a collection of positive semidefinite operators $\{X^{[m]}\}_{m=0}^{N-1}$ also known as \emph{cores} \cite{srinivasan-2021}. To deal with the boundary tensors $X^{[0]}$ and $X^{[N-1]}$ in the same manner as with the internal ones ($m = 1, \ldots, N-2$) and make the notation uniform for all the cores of MPC, we introduce dummy tensors $T_1^{[0]} = r^{-1/2} \delta_{l_0 l'_0}$ and $B_1^{[N]} = r^{-1/2} \delta_{l_N l'_N}$ that are contracted with new 8-rank tensors $X^{[0]}$ and $X^{[N-1]}$, respectively, see Fig.~\ref{figure-variance}(a). In this section, our goal is to express the trace preservation condition in the local form (with respect to a specific floor $m$). To do that we focus on the operator $F \equiv \Phi^{\dag}[I]$ and apply Proposition~\ref{prop-identity}. 

The 0-rank tensor ${\rm tr}[F]$ is just a contraction of $X^{[m]}$ with its tensor environment that contains the contracted floors above $T_1^{[m]}$, the contracted floors below $B_1^{[m+1]}$, and an identity tensor $\delta_{i_m i'_m} \delta_{j_m j'_m}$, see Fig.~\ref{figure-variance}(b). Algebraically,
\begin{equation} \label{first-moment}
    {\rm tr}[F] = {\rm tr}\left[ X^{[m]} \left( I \otimes I \otimes T_1^{[m]} \otimes B_1^{[m+1]} \right) \right].
\end{equation}
We refer to $T_1^{[m]}$ and $B_1^{[m+1]}$ as a \emph{top} and \emph{bottom} contraction of type 1, respectively, because they define the first moment ${\rm tr}[F]$. Similarly, we introduce the top and bottom contractions $T_2^{[m]}$ and $B_2^{[m+1]}$ of type 2 that define the second moment [see Fig.~\ref{figure-variance}(c)], namely,
\begin{equation} \label{second-moment}
    {\rm tr}[F^2] = {\rm tr}\left[ {\rm tr}_1 \big[ X^{[m]\dag} \otimes X^{[m]} \big] \left( I \otimes  I \otimes  T_2^{[m]} \otimes B_2^{[m+1]} \right) \right].
\end{equation}
If $\Phi$ is trace preserving, then the top and bottom contractions of both types satisfy the normalization condition
\begin{equation}
    {\rm tr}\left[ T_1^{[m]} \otimes B_1^{[m]} \right] = {\rm tr}\left[ T_2^{[m]} \otimes B_2^{[m]} \right] = d^{N}
\end{equation}
for each $m$. It is worth mentioning that the contractions $\{T_1^{[m]}\}_{m=0}^{N}$ and $\{T_2^{[m]}\}_{m=0}^{N}$ are all efficiently calculated by propagating from top to bottom along the tensor diagram and recursively appending one floor a time. Similarly, the contractions $\{B_1^{[m]}\}_{m=0}^{N}$ and $\{B_2^{[m]}\}_{m=0}^{N}$ are efficiently calculated by propagating from bottom to top along the tensor diagram.

\begin{figure}
    \centering
    \includegraphics[width = 8.5cm]{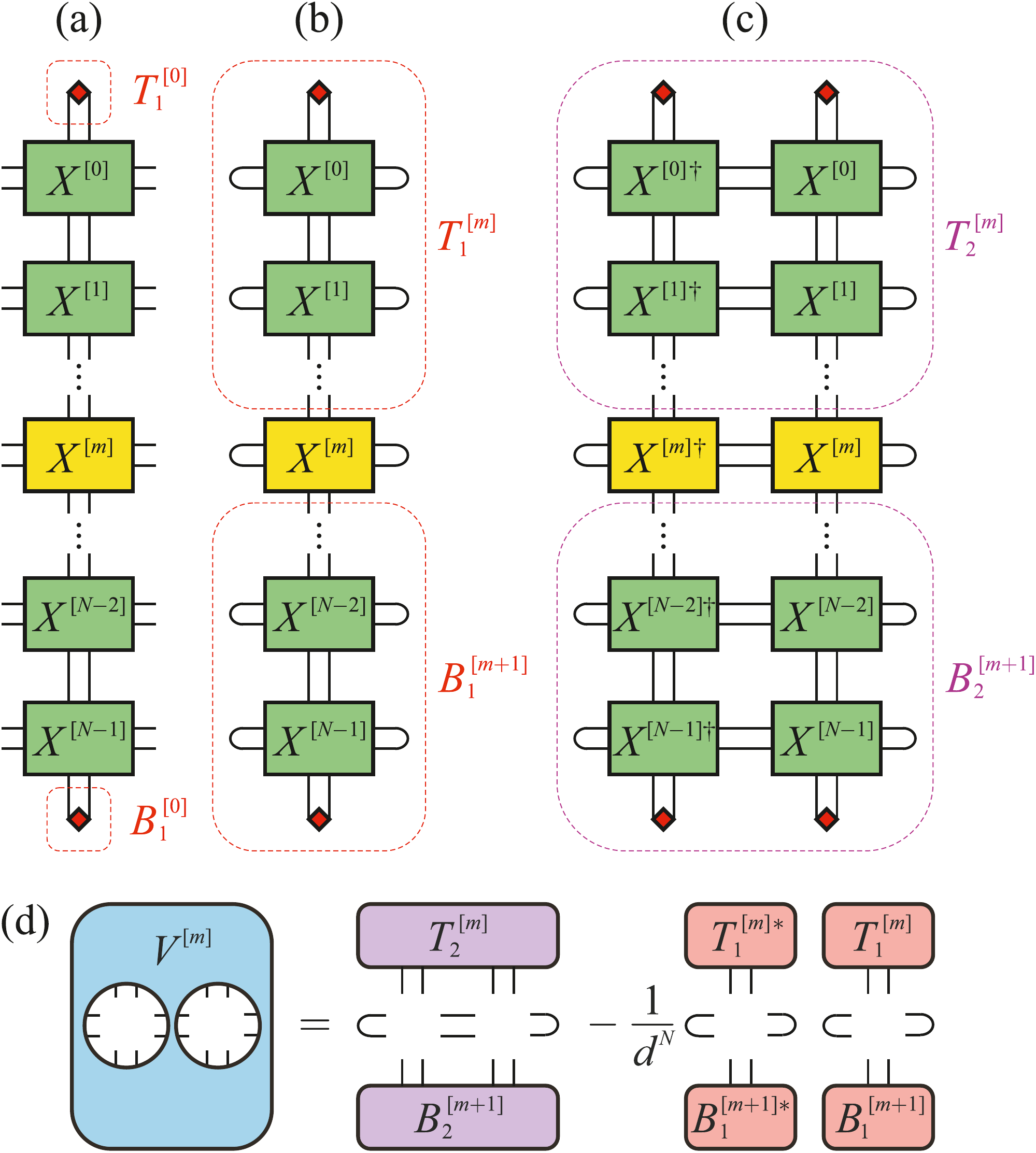}
    \caption{(a) Toward a floor-local optimization of MPC with open boundary conditions. (b) Top and bottom contractions of type 1. (c) Top and bottom contractions of type 2. (d) Variance-like tensor.}
    \label{figure-variance}
\end{figure}

Let $|X^{[m]}\rangle\rangle$ be the row-vectorized form of $X^{[m]}$, then by Proposition~\ref{prop-identity} the functional 
\begin{equation}
    {\rm tr}[F^2] - \frac{1}{d^N}({\rm tr}[F] )^2 \equiv \langle\langle X^{[m]} | V^{[m]} |X^{[m]}\rangle\rangle \geq 0
\end{equation}
is nonnegative and vanishes if and only if, on the one side, $F \equiv \Phi^{\dag}[I] \propto I^{\otimes N}$ and, on the other side, $V^{[m]} |X^{[m]}\rangle\rangle = 0$. Therefore, the introduced $d^4 r^4 \times d^4 r^4 $ matrix $V^{[m]}$ plays role of the variance for a deviation from trace preservation. Thanks to the relations \eqref{first-moment} and \eqref{second-moment}, the variance-like matrix $V^{[m]}$ is readily expressed in terms of tensor diagrams, see Fig.~\ref{figure-variance}(d). 

Adding the normalization condition, we finally obtain a floor-local criterion for an MPC to be trace preserving: 
\begin{equation} \label{criterion-TP}
    \left\{ \begin{array}{l}
    V^{[m]} |X^{[m]}\rangle\rangle = 0,  \\
    {\rm tr}\left[ X^{[m]} \left( I \otimes I \otimes T_1^{[m]} \otimes B_1^{[m+1]} \right) \right] = d^{N}.    
    \end{array} \right.
\end{equation}
Note that the conditions \eqref{criterion-TP} are linear with respect to $X^{[m]}$, which enables us to use them as additional constraints in the semidefinite programming problem to minimize the energy functional by locally modifying $X^{[m]} \geq 0$. The details on the sweeping optimization are presented in the next section.

\section{Variational optimization of a matrix product channel} \label{section-variational-algorithm}

The proposed algorithm to mitigate noise and reduce errors of the variational eigensolver works as follows. Given a sufficiently large number of measurement shots, infer the training quasistate $\varrho_{\rm tr}$ at the output of the variational eigensolver in the form of a $2N$-rank tensor. As overfitting is a standard problem in machine-learning measurement-processing algorithms (see, e.g., \cite{luchnikov-2020}), one should also have an additional collection of independent measurement outcomes and infer the validation quasistate $\varrho_{\rm val}$. Represent the Hamiltonian as a $2N$-rank tensor $H$. Contract $\varrho_{\rm val}$ and $H$ to get an initial validation estimate for the ground state energy $E_{\rm est}$. Then proceed to the variational optimization of the MPC (sandwiched in between the quasistate and $H$):
\begin{enumerate}
    \item Choose the MPC bond dimension $r$, which is an optimization hyper-parameter.
    \item Initialize the MPC to be a trace preserving perturbation of the identity map. This introduces an initial jolt both for the state and the average energy but creates additional correlations among the subsystems (qubits).
    \item Recurrently calculate the top and bottom contractions of both types ($T_{\alpha}^{[m]}$ and $B_{\alpha}^{[m]}$, $\alpha = 1,2$).
    \item Start a down sweep along the MPC (the floor label $m$ increases from $0$ to $N-2$) by performing the following manipulations at the $m$-th floor:
    \begin{enumerate}
        \item Calculate the contraction $E^{[m]}$ of tensors $\varrho_{\rm tr}$, floors $\{X^{[m']}\}_{m'=0}^{m-1}$, floors $\{X^{[m']}\}_{m'=m+1}^{N-1}$, $T_1^{[0]}$, $B_1^{[N]}$ and $H$ (i.e., all the tensors contributing to the energy except for $X^{[m]}$).
        \item Calculate the variance-like matrix $V^{[m]}$.
        \item Solve the semidefinite programming problem of minimizing ${\rm tr}[E^{[m]} X^{[m]}]$ subject to the positivity constraint $X^{[m]} \geq 0$ and the linear constraints \eqref{criterion-TP}. 
        \item Update $X^{[m]}$ in accordance with the solution obtained and recalculate the two top contractions $T_{1}^{[m+1]}$ and $T_{2}^{[m+1]}$.
        \item Calculate the average energy for the validation quasistate $\varrho_{\rm val}$.
    \end{enumerate}
    \item Start an up sweep along the MPC (the floor label $m$ decreases from $N-1$ to $1$) by performing the following manipulations at the $m$-th floor:
    \begin{enumerate}
        \item Calculate the contraction $E^{[m]}$ of tensors $R$, floors $\{X^{[m']}\}_{m'=0}^{m-1}$, floors $\{X^{[m']}\}_{m'=m+1}^{N-1}$, $T_1^{[0]}$, $B_1^{[N]}$ and $H$ (i.e., all the tensors contributing to the energy except for $X^{[m]}$).
        \item Calculate the variance-like matrix $V^{[m]}$.
        \item Solve the semidefinite programming problem of minimizing ${\rm tr}[E^{[m]} X^{[m]}]$ subject to the positivity constraint $X^{[m]} \geq 0$ and the linear constraints \eqref{criterion-TP}.
        \item Update $X^{[m]}$ in accordance with the solution obtained and recalculate two bottom contractions $B_{1}^{[m]}$ and $B_{2}^{[m]}$.
        \item Calculate the average energy for the validation quasistate $\varrho_{\rm val}$.
    \end{enumerate}
    \item Repeat the sweeps until the energies for the training quasistate and the validation quasistate start diverging (or as long as the computational resources allow). Accept the validation solution if the corresponding energy is less than the initial validation estimation $E_{\rm est}$.
\end{enumerate}

\section{Results and discussion} \label{section-results}

In this paper, we consider a proof-of-principle example of estimating the ground state energy of the stretched ${\rm H}_2{\rm O}$ molecule. The inter-nuclear distance is $1.7$ times larger the equillibrium one so that the exact ground state $\ket{\psi_0}$ has a tangible entanglement entropy and, consequently, a significant correlation energy~\cite{boguslawski-2012,boguslawski-2013,molina-espiritu-2015}. We consider an effective Hamiltonian $H$ in the cc-pVDZ basis with CAS(8e,6o), so we deal with $N = 12$ qubits. The Hamiltonian is a sum of $551$ Pauli operator strings (which is much less than $4^N \approx 1.7 \times 10^7$ of all Pauli operator strings).  The exact diagonalization reveals the ground state $\ket{\psi_0}$ and the ground energy $E_0$ that we use as a reference point ($E_0 = 0$). Calculating the von Neumann entropy of individual qubits in the ground state, we observe that some of the qubits are almost disentangled from others ($S < 0.02$ bits for 4 qubits), whereas other qubits are tangibly entangled ($S > 0.4$ bits for 8 qubits). The conventional DMRG algorithm with the bond dimension $1$ yields the correlation energy $E_{\rm DMRG (b.d.=1)} \approx \SI{0.178}{\hartree}$. 

Defining the total entanglement entropy as the maximum von Neumann entropy of a linear subsystem,
\begin{equation}
S_{\rm ent}(\psi_0) = \max_k S_{\text{~}k\text{~top~qubits~}|\text{~}N-k\text{~bottom~qubits~}}(\psi_0),
\end{equation}

\noindent we see that $S_{\rm ent}(\psi_0) \approx 1.02$~bits. In terms of the matrix product representation~\cite{schollwock-2011,orus-2014,cirac-2021}, this corresponds to the effective bond dimension $2^{S_{\rm ent}(\psi_0)} \approx 2.03$ so that any approximation of the ground state with a matrix product state of bond dimension $d_{\rm MPS} < 2^{S_{\rm ent}(\psi_0)}$ will inevitably lead to a significant error. Indeed, in the case $d_{\rm MPS} = 2$, the conventional DMRG algorithm with the bond dimension $2$ yields the error $\epsilon_{\rm cl} \equiv E_{\rm DMRG (b.d.=2)} \approx 0.096$Ha.

The VQE ansatz in Fig.~\ref{figure-circuit} is hardware efficient but rather imprecise. Although it formally creates a matrix product state with the bond dimension $8$, the variationally optimized energy is as high as $\epsilon_{\rm VQE} \approx 0.139$Ha. The effective bond dimension of this VQE state $2^{S_{\rm ent}(\psi_{\rm VQE})} \approx 1.34$, so there is definitely a room for improvement (to be made by the MPC). 

To illustrate the effect of noise on the VQE performance let us consider a noisy version of the quantum circuit in Fig.~\ref{figure-circuit} with both coherent and incoherent noise. The coherent noise still plays a major role in up-to-date quantum processors~\cite{cenedese-2022} and we simulate it by adding a normally distributed random variable ${\cal N}(0, 10^{-2} \text{~radians})$ to each of the angles in the single-qubit rotation gates. In our realization this leads to the energy increase $\SI{2}{\milli\hartree}$. To simulate the incoherent noise, each two-qubit CNOT gate is accompanied by a stochastic map described by a one-parameter Pauli-Lindblad model $e^{\cal L}$, ${\cal L}(\varrho) = \lambda \sum_{k = \{0,1,2,3\}^2} (P_k \varrho P_k - \varrho)$~\cite{berg-2022}. Here $P_k$ is a Pauli string operator (acting on the Hilbert space of two qubits). In our \emph{in silico} simulation we assume $\lambda$ to be uniformly distributed so that its average and range are both $10^{-5}$. This results in an extra energy increase of $\SI{3}{\milli\hartree}$. Equipped with such a noise model, we therefore get the noise-induced error in the energy estimation $\epsilon_{\rm noise} \approx \SI{5}{\milli\hartree}$. 

The execution of the MPC optimization algorithm (Section~\ref{section-variational-algorithm}) induces the initial jolt of the noisy state and its energy (additional $\SI{2}{\milli\hartree}$ in our case). Fig.~\ref{figure-results} depicts the total energy increment $\SI{7}{\milli\hartree}$ for the noiseless VQE energy as a result of the noise and the initial jolt. We use the training and validation sets of measurement outcomes each containing $10^8$ measurement shots. This corresponds to the standard deviation $\SI{1}{\milli\hartree}$ in the energy estimation. Running the variational optimization with the bond dimension $r=2$ we see the decrease of the average energy both for the training and the validation sets of measurement outcomes. The training and validation curves go below the DMRG value for the bond dimension $2$ ($\epsilon_{\rm cl} = \SI{0.096}{\hartree}$), thus showing an advantage of the use of the classical postprocessing algorithm on top of the quantum hardware state preparator. The divergence of the training curve from the validation one happens at the energy $\epsilon_{\rm q+cl} = \SI{0.075}{\hartree}$ which is the final estimation for this VQE ansatz and the MPC of bond dimension $r=2$. The training curve further goes below the ground state which indicates the presence of negative eigenvalues in the training quasistate due to finite statistics. Yet, since the MPC is not correlated with $\varrho_{\rm val}$ ($\varrho_{\rm val}$ is not used for the optimization of the map), the resulting energy for the validation set is an unbiased estimator of the energy of the output state. Therefore, it is guaranteed to be above the ground state energy in expectation.

\begin{figure}
    \centering
    \includegraphics[width = 9cm]{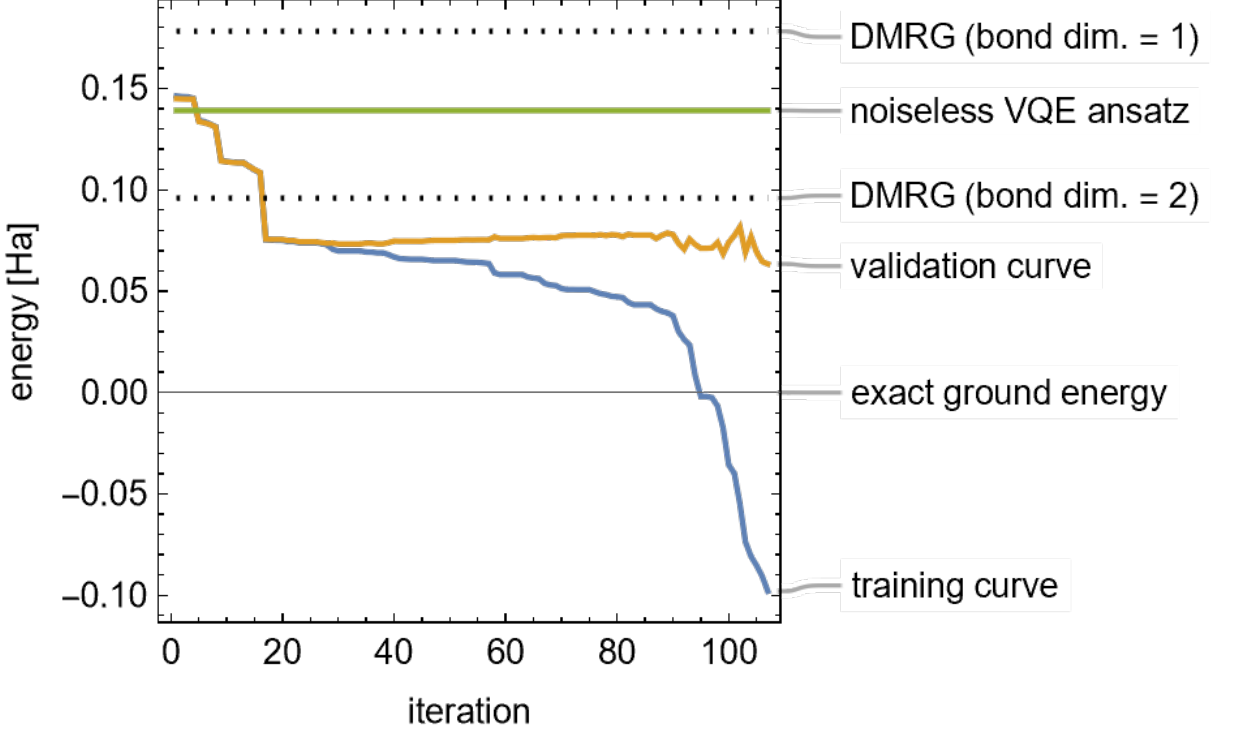}
    \caption{Performance of the MPC variational optimization algorithm.}
    \label{figure-results}
\end{figure}

Clearly, the VQE ansatz used in this example can be changed and further improved at the expense of the increased circuit depth and the increased noise. Similarly, one could use the MPC with a higher bond dimension at the expense of the more complicated semidefinite programming problem to solve. However, the greater the expressivity of the VQE ansatz and the MPC, the less trainable they are expected to be~\cite{holmes-2022}. This eventually restricts the use of too high bond dimensions in the MPC. Nevertheless, the MPC with a modest bond dimension can still be helpful in denoising the least correlated qubits (by mapping them to a pure states) and creating new bonds among the qubits. The presented variational optimization is performed in the gradient-free manner, but we do not exclude the possibility to modify the gradient-based MPC optimization~\cite{torlai-2020,guo-2022} to include the derived trace preservation constraint and thus implement another variational optimization algorithm. Further steps to alleviate the normalization condition in the trace preservation property may include forcing the MPC to take the mixed-canonical form in full analogy with that form for the matrix product states~\cite{schollwock-2011,orus-2014,cirac-2021}.

Another limitation of an MPC is attributed to the fact that it represents a completely positive map and, therefore, it cannot cancel an arbitrary noise~\cite{holevo-2019}. An extension to the realm of positive maps would be helpful, but there is no explicit characterization of positive maps in high dimensions. Nonetheless, if the noise is reasonably small, then the MPC can create new and enforce existing correlations among the qubits, thus diminishing the energy toward the ground one. Therefore, the MPC can be used on top of other noise-mitigation strategies assuming the known noise model, e.g., the probabilistic error cancellation~\cite{temme-2017,li-2017}.

\section{Conclusions} \label{section-conclusions}

We have considered a modification of the virtual linear map algorithm~\cite{vilma} where the role of classical postprocessing is played by the MPC tensor network. We have clarified the MPC expressivity and derived an alternative trace preserving condition enabling a sweeping optimization of the network. The MPC with a modest bond dimension is, on the one hand, sufficiently expressive to appropriately restore the effective bond dimension of the noisy VQE output and, on the other hand, is not expressive enough to render the variational optimization intractable. 

The complexity of MPC variational optimization scales linearly with the number of qubits used (provided the number of measurement shots and Hamiltonian components remains the same). This favorable scaling is of particular importance for near-term quantum computing when we want to scale to problem sizes relevant in modelling protein–ligand interaction energies for drug design. 

By studying the example of the stretched ${\rm H}_2{\rm O}$ molecule with a tangible entanglement we show how the MPC optimization can not only mitigate the noise of the VQE ansatz but also to amend the ansatz itself by creating additional bonds among the qubits. The proposed combination of the quantum hardware with the classical postprocessing (of a fixed bond dimension) diminishes the error down to $\epsilon_{\rm q+cl} < \epsilon_{\rm cl}$, where $\epsilon_{\rm cl}$ is the error obtained in a purely classical simulation with the same bond dimension.

\begin{acknowledgements}
The authors thank Stefan Knecht, Anton Nyk\"{a}nen, and Viacheslav Dubovitskii for help in programming the code and helpful references. 
\end{acknowledgements}

\bibliography{bibliography}

\begin{thebibliography}{55}%
\makeatletter
\providecommand \@ifxundefined [1]{%
 \@ifx{#1\undefined}
}%
\providecommand \@ifnum [1]{%
 \ifnum #1\expandafter \@firstoftwo
 \else \expandafter \@secondoftwo
 \fi
}%
\providecommand \@ifx [1]{%
 \ifx #1\expandafter \@firstoftwo
 \else \expandafter \@secondoftwo
 \fi
}%
\providecommand \natexlab [1]{#1}%
\providecommand \enquote  [1]{``#1''}%
\providecommand \bibnamefont  [1]{#1}%
\providecommand \bibfnamefont [1]{#1}%
\providecommand \citenamefont [1]{#1}%
\providecommand \href@noop [0]{\@secondoftwo}%
\providecommand \href [0]{\begingroup \@sanitize@url \@href}%
\providecommand \@href[1]{\@@startlink{#1}\@@href}%
\providecommand \@@href[1]{\endgroup#1\@@endlink}%
\providecommand \@sanitize@url [0]{\catcode `\\12\catcode `\$12\catcode
  `\&12\catcode `\#12\catcode `\^12\catcode `\_12\catcode `\%12\relax}%
\providecommand \@@startlink[1]{}%
\providecommand \@@endlink[0]{}%
\providecommand \url  [0]{\begingroup\@sanitize@url \@url }%
\providecommand \@url [1]{\endgroup\@href {#1}{\urlprefix }}%
\providecommand \urlprefix  [0]{URL }%
\providecommand \Eprint [0]{\href }%
\providecommand \doibase [0]{https://doi.org/}%
\providecommand \selectlanguage [0]{\@gobble}%
\providecommand \bibinfo  [0]{\@secondoftwo}%
\providecommand \bibfield  [0]{\@secondoftwo}%
\providecommand \translation [1]{[#1]}%
\providecommand \BibitemOpen [0]{}%
\providecommand \bibitemStop [0]{}%
\providecommand \bibitemNoStop [0]{.\EOS\space}%
\providecommand \EOS [0]{\spacefactor3000\relax}%
\providecommand \BibitemShut  [1]{\csname bibitem#1\endcsname}%
\let\auto@bib@innerbib\@empty
\bibitem [{\citenamefont {Peruzzo}\ \emph {et~al.}(2014)\citenamefont
  {Peruzzo}, \citenamefont {McClean}, \citenamefont {Shadbolt}, \citenamefont
  {Yung}, \citenamefont {Zhou}, \citenamefont {Love}, \citenamefont
  {Aspuru-Guzik},\ and\ \citenamefont {O’brien}}]{peruzzo-2014}%
  \BibitemOpen
  \bibfield  {author} {\bibinfo {author} {\bibfnamefont {A.}~\bibnamefont
  {Peruzzo}}, \bibinfo {author} {\bibfnamefont {J.}~\bibnamefont {McClean}},
  \bibinfo {author} {\bibfnamefont {P.}~\bibnamefont {Shadbolt}}, \bibinfo
  {author} {\bibfnamefont {M.-H.}\ \bibnamefont {Yung}}, \bibinfo {author}
  {\bibfnamefont {X.-Q.}\ \bibnamefont {Zhou}}, \bibinfo {author}
  {\bibfnamefont {P.~J.}\ \bibnamefont {Love}}, \bibinfo {author}
  {\bibfnamefont {A.}~\bibnamefont {Aspuru-Guzik}},\ and\ \bibinfo {author}
  {\bibfnamefont {J.~L.}\ \bibnamefont {O’brien}},\ }\bibfield  {title}
  {\bibinfo {title} {A variational eigenvalue solver on a photonic quantum
  processor},\ }\href@noop {} {\bibfield  {journal} {\bibinfo  {journal}
  {Nature Communications}\ }\textbf {\bibinfo {volume} {5}},\ \bibinfo {pages}
  {1} (\bibinfo {year} {2014})}\BibitemShut {NoStop}%
\bibitem [{\citenamefont {McClean}\ \emph {et~al.}(2016)\citenamefont
  {McClean}, \citenamefont {Romero}, \citenamefont {Babbush},\ and\
  \citenamefont {Aspuru-Guzik}}]{mcclean-2016}%
  \BibitemOpen
  \bibfield  {author} {\bibinfo {author} {\bibfnamefont {J.~R.}\ \bibnamefont
  {McClean}}, \bibinfo {author} {\bibfnamefont {J.}~\bibnamefont {Romero}},
  \bibinfo {author} {\bibfnamefont {R.}~\bibnamefont {Babbush}},\ and\ \bibinfo
  {author} {\bibfnamefont {A.}~\bibnamefont {Aspuru-Guzik}},\ }\bibfield
  {title} {\bibinfo {title} {The theory of variational hybrid quantum-classical
  algorithms},\ }\href {https://doi.org/10.1088/1367-2630/18/2/023023}
  {\bibfield  {journal} {\bibinfo  {journal} {New Journal of Physics}\ }\textbf
  {\bibinfo {volume} {18}},\ \bibinfo {pages} {023023} (\bibinfo {year}
  {2016})}\BibitemShut {NoStop}%
\bibitem [{\citenamefont {Elfving}\ \emph {et~al.}(2020)\citenamefont
  {Elfving}, \citenamefont {Broer}, \citenamefont {Webber}, \citenamefont
  {Gavartin}, \citenamefont {Halls}, \citenamefont {Lorton},\ and\
  \citenamefont {Bochevarov}}]{elfving-2020}%
  \BibitemOpen
  \bibfield  {author} {\bibinfo {author} {\bibfnamefont {V.~E.}\ \bibnamefont
  {Elfving}}, \bibinfo {author} {\bibfnamefont {B.~W.}\ \bibnamefont {Broer}},
  \bibinfo {author} {\bibfnamefont {M.}~\bibnamefont {Webber}}, \bibinfo
  {author} {\bibfnamefont {J.}~\bibnamefont {Gavartin}}, \bibinfo {author}
  {\bibfnamefont {M.~D.}\ \bibnamefont {Halls}}, \bibinfo {author}
  {\bibfnamefont {K.~P.}\ \bibnamefont {Lorton}},\ and\ \bibinfo {author}
  {\bibfnamefont {A.}~\bibnamefont {Bochevarov}},\ }\href
  {https://doi.org/10.48550/ARXIV.2009.12472} {\bibinfo {title} {How will
  quantum computers provide an industrially relevant computational advantage in
  quantum chemistry?
  \href{https://arxiv.org/abs/2009.12472}{https://arxiv.org/abs/2009.12472}}}
  (\bibinfo {year} {2020})\BibitemShut {NoStop}%
\bibitem [{\citenamefont {McArdle}\ \emph {et~al.}(2020)\citenamefont
  {McArdle}, \citenamefont {Endo}, \citenamefont {Aspuru-Guzik}, \citenamefont
  {Benjamin},\ and\ \citenamefont {Yuan}}]{mcardle-2020}%
  \BibitemOpen
  \bibfield  {author} {\bibinfo {author} {\bibfnamefont {S.}~\bibnamefont
  {McArdle}}, \bibinfo {author} {\bibfnamefont {S.}~\bibnamefont {Endo}},
  \bibinfo {author} {\bibfnamefont {A.}~\bibnamefont {Aspuru-Guzik}}, \bibinfo
  {author} {\bibfnamefont {S.~C.}\ \bibnamefont {Benjamin}},\ and\ \bibinfo
  {author} {\bibfnamefont {X.}~\bibnamefont {Yuan}},\ }\bibfield  {title}
  {\bibinfo {title} {Quantum computational chemistry},\ }\href
  {https://doi.org/10.1103/RevModPhys.92.015003} {\bibfield  {journal}
  {\bibinfo  {journal} {Rev. Mod. Phys.}\ }\textbf {\bibinfo {volume} {92}},\
  \bibinfo {pages} {015003} (\bibinfo {year} {2020})}\BibitemShut {NoStop}%
\bibitem [{\citenamefont {Cerezo}\ \emph {et~al.}(2021)\citenamefont {Cerezo},
  \citenamefont {Arrasmith}, \citenamefont {Babbush}, \citenamefont {Benjamin},
  \citenamefont {Endo}, \citenamefont {Fujii}, \citenamefont {McClean},
  \citenamefont {Mitarai}, \citenamefont {Yuan}, \citenamefont {Cincio} \emph
  {et~al.}}]{cerezo-2021}%
  \BibitemOpen
  \bibfield  {author} {\bibinfo {author} {\bibfnamefont {M.}~\bibnamefont
  {Cerezo}}, \bibinfo {author} {\bibfnamefont {A.}~\bibnamefont {Arrasmith}},
  \bibinfo {author} {\bibfnamefont {R.}~\bibnamefont {Babbush}}, \bibinfo
  {author} {\bibfnamefont {S.~C.}\ \bibnamefont {Benjamin}}, \bibinfo {author}
  {\bibfnamefont {S.}~\bibnamefont {Endo}}, \bibinfo {author} {\bibfnamefont
  {K.}~\bibnamefont {Fujii}}, \bibinfo {author} {\bibfnamefont {J.~R.}\
  \bibnamefont {McClean}}, \bibinfo {author} {\bibfnamefont {K.}~\bibnamefont
  {Mitarai}}, \bibinfo {author} {\bibfnamefont {X.}~\bibnamefont {Yuan}},
  \bibinfo {author} {\bibfnamefont {L.}~\bibnamefont {Cincio}}, \emph
  {et~al.},\ }\bibfield  {title} {\bibinfo {title} {Variational quantum
  algorithms},\ }\href@noop {} {\bibfield  {journal} {\bibinfo  {journal}
  {Nature Reviews Physics}\ }\textbf {\bibinfo {volume} {3}},\ \bibinfo {pages}
  {625} (\bibinfo {year} {2021})}\BibitemShut {NoStop}%
\bibitem [{\citenamefont {Bharti}\ \emph {et~al.}(2022)\citenamefont {Bharti},
  \citenamefont {Cervera-Lierta}, \citenamefont {Kyaw}, \citenamefont {Haug},
  \citenamefont {Alperin-Lea}, \citenamefont {Anand}, \citenamefont {Degroote},
  \citenamefont {Heimonen}, \citenamefont {Kottmann}, \citenamefont {Menke},
  \citenamefont {Mok}, \citenamefont {Sim}, \citenamefont {Kwek},\ and\
  \citenamefont {Aspuru-Guzik}}]{bharti-2022}%
  \BibitemOpen
  \bibfield  {author} {\bibinfo {author} {\bibfnamefont {K.}~\bibnamefont
  {Bharti}}, \bibinfo {author} {\bibfnamefont {A.}~\bibnamefont
  {Cervera-Lierta}}, \bibinfo {author} {\bibfnamefont {T.~H.}\ \bibnamefont
  {Kyaw}}, \bibinfo {author} {\bibfnamefont {T.}~\bibnamefont {Haug}}, \bibinfo
  {author} {\bibfnamefont {S.}~\bibnamefont {Alperin-Lea}}, \bibinfo {author}
  {\bibfnamefont {A.}~\bibnamefont {Anand}}, \bibinfo {author} {\bibfnamefont
  {M.}~\bibnamefont {Degroote}}, \bibinfo {author} {\bibfnamefont
  {H.}~\bibnamefont {Heimonen}}, \bibinfo {author} {\bibfnamefont {J.~S.}\
  \bibnamefont {Kottmann}}, \bibinfo {author} {\bibfnamefont {T.}~\bibnamefont
  {Menke}}, \bibinfo {author} {\bibfnamefont {W.-K.}\ \bibnamefont {Mok}},
  \bibinfo {author} {\bibfnamefont {S.}~\bibnamefont {Sim}}, \bibinfo {author}
  {\bibfnamefont {L.-C.}\ \bibnamefont {Kwek}},\ and\ \bibinfo {author}
  {\bibfnamefont {A.}~\bibnamefont {Aspuru-Guzik}},\ }\bibfield  {title}
  {\bibinfo {title} {Noisy intermediate-scale quantum algorithms},\ }\href
  {https://doi.org/10.1103/RevModPhys.94.015004} {\bibfield  {journal}
  {\bibinfo  {journal} {Rev. Mod. Phys.}\ }\textbf {\bibinfo {volume} {94}},\
  \bibinfo {pages} {015004} (\bibinfo {year} {2022})}\BibitemShut {NoStop}%
\bibitem [{\citenamefont {Fedorov}\ \emph {et~al.}(2022)\citenamefont
  {Fedorov}, \citenamefont {Peng}, \citenamefont {Govind},\ and\ \citenamefont
  {Alexeev}}]{fedorov-2022}%
  \BibitemOpen
  \bibfield  {author} {\bibinfo {author} {\bibfnamefont {D.~A.}\ \bibnamefont
  {Fedorov}}, \bibinfo {author} {\bibfnamefont {B.}~\bibnamefont {Peng}},
  \bibinfo {author} {\bibfnamefont {N.}~\bibnamefont {Govind}},\ and\ \bibinfo
  {author} {\bibfnamefont {Y.}~\bibnamefont {Alexeev}},\ }\bibfield  {title}
  {\bibinfo {title} {Vqe method: A short survey and recent developments},\
  }\href@noop {} {\bibfield  {journal} {\bibinfo  {journal} {Materials Theory}\
  }\textbf {\bibinfo {volume} {6}},\ \bibinfo {pages} {1} (\bibinfo {year}
  {2022})}\BibitemShut {NoStop}%
\bibitem [{\citenamefont {Anand}\ \emph {et~al.}(2022)\citenamefont {Anand},
  \citenamefont {Schleich}, \citenamefont {Alperin-Lea}, \citenamefont
  {Jensen}, \citenamefont {Sim}, \citenamefont {D{\'\i}az-Tinoco},
  \citenamefont {Kottmann}, \citenamefont {Degroote}, \citenamefont
  {Izmaylov},\ and\ \citenamefont {Aspuru-Guzik}}]{anand-2022}%
  \BibitemOpen
  \bibfield  {author} {\bibinfo {author} {\bibfnamefont {A.}~\bibnamefont
  {Anand}}, \bibinfo {author} {\bibfnamefont {P.}~\bibnamefont {Schleich}},
  \bibinfo {author} {\bibfnamefont {S.}~\bibnamefont {Alperin-Lea}}, \bibinfo
  {author} {\bibfnamefont {P.~W.}\ \bibnamefont {Jensen}}, \bibinfo {author}
  {\bibfnamefont {S.}~\bibnamefont {Sim}}, \bibinfo {author} {\bibfnamefont
  {M.}~\bibnamefont {D{\'\i}az-Tinoco}}, \bibinfo {author} {\bibfnamefont
  {J.~S.}\ \bibnamefont {Kottmann}}, \bibinfo {author} {\bibfnamefont
  {M.}~\bibnamefont {Degroote}}, \bibinfo {author} {\bibfnamefont {A.~F.}\
  \bibnamefont {Izmaylov}},\ and\ \bibinfo {author} {\bibfnamefont
  {A.}~\bibnamefont {Aspuru-Guzik}},\ }\bibfield  {title} {\bibinfo {title} {A
  quantum computing view on unitary coupled cluster theory},\ }\href@noop {}
  {\bibfield  {journal} {\bibinfo  {journal} {Chemical Society Reviews}\ }
  (\bibinfo {year} {2022})}\BibitemShut {NoStop}%
\bibitem [{\citenamefont {Tilly}\ \emph {et~al.}(2022)\citenamefont {Tilly},
  \citenamefont {Chen}, \citenamefont {Cao}, \citenamefont {Picozzi},
  \citenamefont {Setia}, \citenamefont {Li}, \citenamefont {Grant},
  \citenamefont {Wossnig}, \citenamefont {Rungger}, \citenamefont {Booth},\
  and\ \citenamefont {Tennyson}}]{tilly-2022}%
  \BibitemOpen
  \bibfield  {author} {\bibinfo {author} {\bibfnamefont {J.}~\bibnamefont
  {Tilly}}, \bibinfo {author} {\bibfnamefont {H.}~\bibnamefont {Chen}},
  \bibinfo {author} {\bibfnamefont {S.}~\bibnamefont {Cao}}, \bibinfo {author}
  {\bibfnamefont {D.}~\bibnamefont {Picozzi}}, \bibinfo {author} {\bibfnamefont
  {K.}~\bibnamefont {Setia}}, \bibinfo {author} {\bibfnamefont
  {Y.}~\bibnamefont {Li}}, \bibinfo {author} {\bibfnamefont {E.}~\bibnamefont
  {Grant}}, \bibinfo {author} {\bibfnamefont {L.}~\bibnamefont {Wossnig}},
  \bibinfo {author} {\bibfnamefont {I.}~\bibnamefont {Rungger}}, \bibinfo
  {author} {\bibfnamefont {G.~H.}\ \bibnamefont {Booth}},\ and\ \bibinfo
  {author} {\bibfnamefont {J.}~\bibnamefont {Tennyson}},\ }\bibfield  {title}
  {\bibinfo {title} {The variational quantum eigensolver: A review of methods
  and best practices},\ }\href
  {https://doi.org/https://doi.org/10.1016/j.physrep.2022.08.003} {\bibfield
  {journal} {\bibinfo  {journal} {Physics Reports}\ }\textbf {\bibinfo {volume}
  {986}},\ \bibinfo {pages} {1} (\bibinfo {year} {2022})}\BibitemShut {NoStop}%
\bibitem [{\citenamefont {Malone}\ \emph {et~al.}(2022)\citenamefont {Malone},
  \citenamefont {Parrish}, \citenamefont {Welden}, \citenamefont {Fox},
  \citenamefont {Degroote}, \citenamefont {Kyoseva}, \citenamefont {Moll},
  \citenamefont {Santagati},\ and\ \citenamefont {Streif}}]{malone-2022}%
  \BibitemOpen
  \bibfield  {author} {\bibinfo {author} {\bibfnamefont {F.~D.}\ \bibnamefont
  {Malone}}, \bibinfo {author} {\bibfnamefont {R.~M.}\ \bibnamefont {Parrish}},
  \bibinfo {author} {\bibfnamefont {A.~R.}\ \bibnamefont {Welden}}, \bibinfo
  {author} {\bibfnamefont {T.}~\bibnamefont {Fox}}, \bibinfo {author}
  {\bibfnamefont {M.}~\bibnamefont {Degroote}}, \bibinfo {author}
  {\bibfnamefont {E.}~\bibnamefont {Kyoseva}}, \bibinfo {author} {\bibfnamefont
  {N.}~\bibnamefont {Moll}}, \bibinfo {author} {\bibfnamefont {R.}~\bibnamefont
  {Santagati}},\ and\ \bibinfo {author} {\bibfnamefont {M.}~\bibnamefont
  {Streif}},\ }\bibfield  {title} {\bibinfo {title} {Towards the simulation of
  large scale protein--ligand interactions on nisq-era quantum computers},\
  }\href@noop {} {\bibfield  {journal} {\bibinfo  {journal} {Chemical Science}\
  }\textbf {\bibinfo {volume} {13}},\ \bibinfo {pages} {3094} (\bibinfo {year}
  {2022})}\BibitemShut {NoStop}%
\bibitem [{\citenamefont {Kirsopp}\ \emph {et~al.}(2022)\citenamefont
  {Kirsopp}, \citenamefont {Di~Paola}, \citenamefont {Manrique}, \citenamefont
  {Krompiec}, \citenamefont {Greene-Diniz}, \citenamefont {Guba}, \citenamefont
  {Meyder}, \citenamefont {Wolf}, \citenamefont {Strahm},\ and\ \citenamefont
  {Muñoz~Ramo}}]{kirsopp-2022}%
  \BibitemOpen
  \bibfield  {author} {\bibinfo {author} {\bibfnamefont {J.~J.~M.}\
  \bibnamefont {Kirsopp}}, \bibinfo {author} {\bibfnamefont {C.}~\bibnamefont
  {Di~Paola}}, \bibinfo {author} {\bibfnamefont {D.~Z.}\ \bibnamefont
  {Manrique}}, \bibinfo {author} {\bibfnamefont {M.}~\bibnamefont {Krompiec}},
  \bibinfo {author} {\bibfnamefont {G.}~\bibnamefont {Greene-Diniz}}, \bibinfo
  {author} {\bibfnamefont {W.}~\bibnamefont {Guba}}, \bibinfo {author}
  {\bibfnamefont {A.}~\bibnamefont {Meyder}}, \bibinfo {author} {\bibfnamefont
  {D.}~\bibnamefont {Wolf}}, \bibinfo {author} {\bibfnamefont {M.}~\bibnamefont
  {Strahm}},\ and\ \bibinfo {author} {\bibfnamefont {D.}~\bibnamefont
  {Muñoz~Ramo}},\ }\bibfield  {title} {\bibinfo {title} {Quantum computational
  quantification of protein–ligand interactions},\ }\href@noop {} {\bibfield
  {journal} {\bibinfo  {journal} {International Journal of Quantum Chemistry}\
  }\textbf {\bibinfo {volume} {122}},\ \bibinfo {pages} {e26975} (\bibinfo
  {year} {2022})}\BibitemShut {NoStop}%
\bibitem [{\citenamefont {Robert}\ \emph {et~al.}(2021)\citenamefont {Robert},
  \citenamefont {Barkoutsos}, \citenamefont {Woerner},\ and\ \citenamefont
  {Tavernelli}}]{robert-2021}%
  \BibitemOpen
  \bibfield  {author} {\bibinfo {author} {\bibfnamefont {A.}~\bibnamefont
  {Robert}}, \bibinfo {author} {\bibfnamefont {P.~K.}\ \bibnamefont
  {Barkoutsos}}, \bibinfo {author} {\bibfnamefont {S.}~\bibnamefont
  {Woerner}},\ and\ \bibinfo {author} {\bibfnamefont {I.}~\bibnamefont
  {Tavernelli}},\ }\bibfield  {title} {\bibinfo {title} {Resource-efficient
  quantum algorithm for protein folding},\ }\href@noop {} {\bibfield  {journal}
  {\bibinfo  {journal} {npj Quantum Information}\ }\textbf {\bibinfo {volume}
  {7}},\ \bibinfo {pages} {1} (\bibinfo {year} {2021})}\BibitemShut {NoStop}%
\bibitem [{\citenamefont {Marchetti}\ \emph {et~al.}(2022)\citenamefont
  {Marchetti}, \citenamefont {Nifosì}, \citenamefont {Martelli}, \citenamefont
  {Da~Pozzo}, \citenamefont {Cappello}, \citenamefont {Banterle}, \citenamefont
  {Trincavelli}, \citenamefont {Martini},\ and\ \citenamefont
  {D’Elia}}]{marchetti-2022}%
  \BibitemOpen
  \bibfield  {author} {\bibinfo {author} {\bibfnamefont {L.}~\bibnamefont
  {Marchetti}}, \bibinfo {author} {\bibfnamefont {R.}~\bibnamefont {Nifosì}},
  \bibinfo {author} {\bibfnamefont {P.~L.}\ \bibnamefont {Martelli}}, \bibinfo
  {author} {\bibfnamefont {E.}~\bibnamefont {Da~Pozzo}}, \bibinfo {author}
  {\bibfnamefont {V.}~\bibnamefont {Cappello}}, \bibinfo {author}
  {\bibfnamefont {F.}~\bibnamefont {Banterle}}, \bibinfo {author}
  {\bibfnamefont {M.~L.}\ \bibnamefont {Trincavelli}}, \bibinfo {author}
  {\bibfnamefont {C.}~\bibnamefont {Martini}},\ and\ \bibinfo {author}
  {\bibfnamefont {M.}~\bibnamefont {D’Elia}},\ }\bibfield  {title} {\bibinfo
  {title} {{Quantum computing algorithms: getting closer to critical problems
  in computational biology}},\ }\href@noop {} {\bibfield  {journal} {\bibinfo
  {journal} {Briefings in Bioinformatics}\ }\textbf {\bibinfo {volume} {23}},\
  \bibinfo {pages} {bbac437} (\bibinfo {year} {2022})}\BibitemShut {NoStop}%
\bibitem [{\citenamefont {Schneider}\ \emph {et~al.}(2020)\citenamefont
  {Schneider}, \citenamefont {Walters}, \citenamefont {Plowright},
  \citenamefont {Sieroka}, \citenamefont {Listgarten}, \citenamefont {Goodnow},
  \citenamefont {Fisher}, \citenamefont {Jansen}, \citenamefont {Duca},
  \citenamefont {Rush} \emph {et~al.}}]{schneider-2020}%
  \BibitemOpen
  \bibfield  {author} {\bibinfo {author} {\bibfnamefont {P.}~\bibnamefont
  {Schneider}}, \bibinfo {author} {\bibfnamefont {W.~P.}\ \bibnamefont
  {Walters}}, \bibinfo {author} {\bibfnamefont {A.~T.}\ \bibnamefont
  {Plowright}}, \bibinfo {author} {\bibfnamefont {N.}~\bibnamefont {Sieroka}},
  \bibinfo {author} {\bibfnamefont {J.}~\bibnamefont {Listgarten}}, \bibinfo
  {author} {\bibfnamefont {R.~A.}\ \bibnamefont {Goodnow}}, \bibinfo {author}
  {\bibfnamefont {J.}~\bibnamefont {Fisher}}, \bibinfo {author} {\bibfnamefont
  {J.~M.}\ \bibnamefont {Jansen}}, \bibinfo {author} {\bibfnamefont {J.~S.}\
  \bibnamefont {Duca}}, \bibinfo {author} {\bibfnamefont {T.~S.}\ \bibnamefont
  {Rush}}, \emph {et~al.},\ }\bibfield  {title} {\bibinfo {title} {Rethinking
  drug design in the artificial intelligence era},\ }\href@noop {} {\bibfield
  {journal} {\bibinfo  {journal} {Nature Reviews Drug Discovery}\ }\textbf
  {\bibinfo {volume} {19}},\ \bibinfo {pages} {353} (\bibinfo {year}
  {2020})}\BibitemShut {NoStop}%
\bibitem [{\citenamefont {Hermann}\ \emph {et~al.}(2020)\citenamefont
  {Hermann}, \citenamefont {Sch{\"a}tzle},\ and\ \citenamefont
  {No{\'e}}}]{hermann-2020}%
  \BibitemOpen
  \bibfield  {author} {\bibinfo {author} {\bibfnamefont {J.}~\bibnamefont
  {Hermann}}, \bibinfo {author} {\bibfnamefont {Z.}~\bibnamefont
  {Sch{\"a}tzle}},\ and\ \bibinfo {author} {\bibfnamefont {F.}~\bibnamefont
  {No{\'e}}},\ }\bibfield  {title} {\bibinfo {title} {Deep-neural-network
  solution of the electronic {S}chr{\"o}dinger equation},\ }\href@noop {}
  {\bibfield  {journal} {\bibinfo  {journal} {Nature Chemistry}\ }\textbf
  {\bibinfo {volume} {12}},\ \bibinfo {pages} {891} (\bibinfo {year}
  {2020})}\BibitemShut {NoStop}%
\bibitem [{\citenamefont {Goings}\ \emph {et~al.}(2022)\citenamefont {Goings},
  \citenamefont {White}, \citenamefont {Lee}, \citenamefont {Tautermann},
  \citenamefont {Degroote}, \citenamefont {Gidney}, \citenamefont {Shiozaki},
  \citenamefont {Babbush},\ and\ \citenamefont {Rubin}}]{goings-2022}%
  \BibitemOpen
  \bibfield  {author} {\bibinfo {author} {\bibfnamefont {J.~J.}\ \bibnamefont
  {Goings}}, \bibinfo {author} {\bibfnamefont {A.}~\bibnamefont {White}},
  \bibinfo {author} {\bibfnamefont {J.}~\bibnamefont {Lee}}, \bibinfo {author}
  {\bibfnamefont {C.~S.}\ \bibnamefont {Tautermann}}, \bibinfo {author}
  {\bibfnamefont {M.}~\bibnamefont {Degroote}}, \bibinfo {author}
  {\bibfnamefont {C.}~\bibnamefont {Gidney}}, \bibinfo {author} {\bibfnamefont
  {T.}~\bibnamefont {Shiozaki}}, \bibinfo {author} {\bibfnamefont
  {R.}~\bibnamefont {Babbush}},\ and\ \bibinfo {author} {\bibfnamefont {N.~C.}\
  \bibnamefont {Rubin}},\ }\bibfield  {title} {\bibinfo {title} {Reliably
  assessing the electronic structure of cytochrome p450 on today’s classical
  computers and tomorrow’s quantum computers},\ }\href
  {https://doi.org/10.1073/pnas.2203533119} {\bibfield  {journal} {\bibinfo
  {journal} {Proceedings of the National Academy of Sciences}\ }\textbf
  {\bibinfo {volume} {119}},\ \bibinfo {pages} {e2203533119} (\bibinfo {year}
  {2022})}\BibitemShut {NoStop}%
\bibitem [{\citenamefont {Koczor}(2021)}]{koczor-2021}%
  \BibitemOpen
  \bibfield  {author} {\bibinfo {author} {\bibfnamefont {B.}~\bibnamefont
  {Koczor}},\ }\bibfield  {title} {\bibinfo {title} {Exponential error
  suppression for near-term quantum devices},\ }\href
  {https://doi.org/10.1103/PhysRevX.11.031057} {\bibfield  {journal} {\bibinfo
  {journal} {Phys. Rev. X}\ }\textbf {\bibinfo {volume} {11}},\ \bibinfo
  {pages} {031057} (\bibinfo {year} {2021})}\BibitemShut {NoStop}%
\bibitem [{\citenamefont {Huggins}\ \emph {et~al.}(2021)\citenamefont
  {Huggins}, \citenamefont {McArdle}, \citenamefont {O'Brien}, \citenamefont
  {Lee}, \citenamefont {Rubin}, \citenamefont {Boixo}, \citenamefont {Whaley},
  \citenamefont {Babbush},\ and\ \citenamefont {McClean}}]{huggins-2021}%
  \BibitemOpen
  \bibfield  {author} {\bibinfo {author} {\bibfnamefont {W.~J.}\ \bibnamefont
  {Huggins}}, \bibinfo {author} {\bibfnamefont {S.}~\bibnamefont {McArdle}},
  \bibinfo {author} {\bibfnamefont {T.~E.}\ \bibnamefont {O'Brien}}, \bibinfo
  {author} {\bibfnamefont {J.}~\bibnamefont {Lee}}, \bibinfo {author}
  {\bibfnamefont {N.~C.}\ \bibnamefont {Rubin}}, \bibinfo {author}
  {\bibfnamefont {S.}~\bibnamefont {Boixo}}, \bibinfo {author} {\bibfnamefont
  {K.~B.}\ \bibnamefont {Whaley}}, \bibinfo {author} {\bibfnamefont
  {R.}~\bibnamefont {Babbush}},\ and\ \bibinfo {author} {\bibfnamefont {J.~R.}\
  \bibnamefont {McClean}},\ }\bibfield  {title} {\bibinfo {title} {Virtual
  distillation for quantum error mitigation},\ }\href
  {https://doi.org/10.1103/PhysRevX.11.041036} {\bibfield  {journal} {\bibinfo
  {journal} {Phys. Rev. X}\ }\textbf {\bibinfo {volume} {11}},\ \bibinfo
  {pages} {041036} (\bibinfo {year} {2021})}\BibitemShut {NoStop}%
\bibitem [{\citenamefont {Temme}\ \emph {et~al.}(2017)\citenamefont {Temme},
  \citenamefont {Bravyi},\ and\ \citenamefont {Gambetta}}]{temme-2017}%
  \BibitemOpen
  \bibfield  {author} {\bibinfo {author} {\bibfnamefont {K.}~\bibnamefont
  {Temme}}, \bibinfo {author} {\bibfnamefont {S.}~\bibnamefont {Bravyi}},\ and\
  \bibinfo {author} {\bibfnamefont {J.~M.}\ \bibnamefont {Gambetta}},\
  }\bibfield  {title} {\bibinfo {title} {Error mitigation for short-depth
  quantum circuits},\ }\href {https://doi.org/10.1103/PhysRevLett.119.180509}
  {\bibfield  {journal} {\bibinfo  {journal} {Phys. Rev. Lett.}\ }\textbf
  {\bibinfo {volume} {119}},\ \bibinfo {pages} {180509} (\bibinfo {year}
  {2017})}\BibitemShut {NoStop}%
\bibitem [{\citenamefont {Li}\ and\ \citenamefont {Benjamin}(2017)}]{li-2017}%
  \BibitemOpen
  \bibfield  {author} {\bibinfo {author} {\bibfnamefont {Y.}~\bibnamefont
  {Li}}\ and\ \bibinfo {author} {\bibfnamefont {S.~C.}\ \bibnamefont
  {Benjamin}},\ }\bibfield  {title} {\bibinfo {title} {Efficient variational
  quantum simulator incorporating active error minimization},\ }\href
  {https://doi.org/10.1103/PhysRevX.7.021050} {\bibfield  {journal} {\bibinfo
  {journal} {Phys. Rev. X}\ }\textbf {\bibinfo {volume} {7}},\ \bibinfo {pages}
  {021050} (\bibinfo {year} {2017})}\BibitemShut {NoStop}%
\bibitem [{\citenamefont {Guo}\ and\ \citenamefont {Yang}(2022)}]{guo-2022}%
  \BibitemOpen
  \bibfield  {author} {\bibinfo {author} {\bibfnamefont {Y.}~\bibnamefont
  {Guo}}\ and\ \bibinfo {author} {\bibfnamefont {S.}~\bibnamefont {Yang}},\
  }\bibfield  {title} {\bibinfo {title} {Quantum error mitigation via matrix
  product operators},\ }\href {https://doi.org/10.1103/PRXQuantum.3.040313}
  {\bibfield  {journal} {\bibinfo  {journal} {PRX Quantum}\ }\textbf {\bibinfo
  {volume} {3}},\ \bibinfo {pages} {040313} (\bibinfo {year}
  {2022})}\BibitemShut {NoStop}%
\bibitem [{\citenamefont {Vovrosh}\ \emph {et~al.}(2021)\citenamefont
  {Vovrosh}, \citenamefont {Khosla}, \citenamefont {Greenaway}, \citenamefont
  {Self}, \citenamefont {Kim},\ and\ \citenamefont {Knolle}}]{vovrosh-2021}%
  \BibitemOpen
  \bibfield  {author} {\bibinfo {author} {\bibfnamefont {J.}~\bibnamefont
  {Vovrosh}}, \bibinfo {author} {\bibfnamefont {K.~E.}\ \bibnamefont {Khosla}},
  \bibinfo {author} {\bibfnamefont {S.}~\bibnamefont {Greenaway}}, \bibinfo
  {author} {\bibfnamefont {C.}~\bibnamefont {Self}}, \bibinfo {author}
  {\bibfnamefont {M.~S.}\ \bibnamefont {Kim}},\ and\ \bibinfo {author}
  {\bibfnamefont {J.}~\bibnamefont {Knolle}},\ }\bibfield  {title} {\bibinfo
  {title} {Simple mitigation of global depolarizing errors in quantum
  simulations},\ }\href {https://doi.org/10.1103/PhysRevE.104.035309}
  {\bibfield  {journal} {\bibinfo  {journal} {Phys. Rev. E}\ }\textbf {\bibinfo
  {volume} {104}},\ \bibinfo {pages} {035309} (\bibinfo {year}
  {2021})}\BibitemShut {NoStop}%
\bibitem [{\citenamefont {Endo}\ \emph {et~al.}(2018)\citenamefont {Endo},
  \citenamefont {Benjamin},\ and\ \citenamefont {Li}}]{endo-2018}%
  \BibitemOpen
  \bibfield  {author} {\bibinfo {author} {\bibfnamefont {S.}~\bibnamefont
  {Endo}}, \bibinfo {author} {\bibfnamefont {S.~C.}\ \bibnamefont {Benjamin}},\
  and\ \bibinfo {author} {\bibfnamefont {Y.}~\bibnamefont {Li}},\ }\bibfield
  {title} {\bibinfo {title} {Practical quantum error mitigation for near-future
  applications},\ }\href {https://doi.org/10.1103/PhysRevX.8.031027} {\bibfield
   {journal} {\bibinfo  {journal} {Phys. Rev. X}\ }\textbf {\bibinfo {volume}
  {8}},\ \bibinfo {pages} {031027} (\bibinfo {year} {2018})}\BibitemShut
  {NoStop}%
\bibitem [{\citenamefont {Boguslawski}\ \emph {et~al.}(2012)\citenamefont
  {Boguslawski}, \citenamefont {Tecmer}, \citenamefont {Legeza},\ and\
  \citenamefont {Reiher}}]{boguslawski-2012}%
  \BibitemOpen
  \bibfield  {author} {\bibinfo {author} {\bibfnamefont {K.}~\bibnamefont
  {Boguslawski}}, \bibinfo {author} {\bibfnamefont {P.}~\bibnamefont {Tecmer}},
  \bibinfo {author} {\bibfnamefont {O.}~\bibnamefont {Legeza}},\ and\ \bibinfo
  {author} {\bibfnamefont {M.}~\bibnamefont {Reiher}},\ }\bibfield  {title}
  {\bibinfo {title} {Entanglement measures for single- and multireference
  correlation effects},\ }\href@noop {} {\bibfield  {journal} {\bibinfo
  {journal} {The Journal of Physical Chemistry Letters}\ }\textbf {\bibinfo
  {volume} {3}},\ \bibinfo {pages} {3129} (\bibinfo {year} {2012})}\BibitemShut
  {NoStop}%
\bibitem [{\citenamefont {Boguslawski}\ \emph {et~al.}(2013)\citenamefont
  {Boguslawski}, \citenamefont {Tecmer}, \citenamefont {Barcza}, \citenamefont
  {Legeza},\ and\ \citenamefont {Reiher}}]{boguslawski-2013}%
  \BibitemOpen
  \bibfield  {author} {\bibinfo {author} {\bibfnamefont {K.}~\bibnamefont
  {Boguslawski}}, \bibinfo {author} {\bibfnamefont {P.}~\bibnamefont {Tecmer}},
  \bibinfo {author} {\bibfnamefont {G.}~\bibnamefont {Barcza}}, \bibinfo
  {author} {\bibfnamefont {O.}~\bibnamefont {Legeza}},\ and\ \bibinfo {author}
  {\bibfnamefont {M.}~\bibnamefont {Reiher}},\ }\bibfield  {title} {\bibinfo
  {title} {Orbital entanglement in bond-formation processes},\ }\href@noop {}
  {\bibfield  {journal} {\bibinfo  {journal} {Journal of Chemical Theory and
  Computation}\ }\textbf {\bibinfo {volume} {9}},\ \bibinfo {pages} {2959}
  (\bibinfo {year} {2013})}\BibitemShut {NoStop}%
\bibitem [{\citenamefont {Molina-Esp{\i}ritu}\ \emph
  {et~al.}(2015)\citenamefont {Molina-Esp{\i}ritu}, \citenamefont {Esquivel},
  \citenamefont {L/'{o}pez-Rosa},\ and\ \citenamefont
  {Dehesa}}]{molina-espiritu-2015}%
  \BibitemOpen
  \bibfield  {author} {\bibinfo {author} {\bibfnamefont {M.}~\bibnamefont
  {Molina-Esp{\i}ritu}}, \bibinfo {author} {\bibfnamefont {R.~O.}\ \bibnamefont
  {Esquivel}}, \bibinfo {author} {\bibfnamefont {S.}~\bibnamefont
  {L/'{o}pez-Rosa}},\ and\ \bibinfo {author} {\bibfnamefont {J.~S.}\
  \bibnamefont {Dehesa}},\ }\bibfield  {title} {\bibinfo {title} {Quantum
  entanglement and chemical reactivity},\ }\href@noop {} {\bibfield  {journal}
  {\bibinfo  {journal} {Journal of Chemical Theory and Computation}\ }\textbf
  {\bibinfo {volume} {11}},\ \bibinfo {pages} {5144} (\bibinfo {year}
  {2015})}\BibitemShut {NoStop}%
\bibitem [{\citenamefont {Seo}\ \emph {et~al.}(2021)\citenamefont {Seo},
  \citenamefont {Choi}, \citenamefont {Park},\ and\ \citenamefont
  {Ahn}}]{seo-2021}%
  \BibitemOpen
  \bibfield  {author} {\bibinfo {author} {\bibfnamefont {S.}~\bibnamefont
  {Seo}}, \bibinfo {author} {\bibfnamefont {J.}~\bibnamefont {Choi}}, \bibinfo
  {author} {\bibfnamefont {S.}~\bibnamefont {Park}},\ and\ \bibinfo {author}
  {\bibfnamefont {J.}~\bibnamefont {Ahn}},\ }\bibfield  {title} {\bibinfo
  {title} {Binding affinity prediction for protein--ligand complex using deep
  attention mechanism based on intermolecular interactions},\ }\href@noop {}
  {\bibfield  {journal} {\bibinfo  {journal} {BMC Bioinformatics}\ }\textbf
  {\bibinfo {volume} {22}},\ \bibinfo {pages} {1} (\bibinfo {year}
  {2021})}\BibitemShut {NoStop}%
\bibitem [{\citenamefont {Cirac}\ \emph {et~al.}(2017)\citenamefont {Cirac},
  \citenamefont {Perez-Garcia}, \citenamefont {Schuch},\ and\ \citenamefont
  {Verstraete}}]{cirac-2017}%
  \BibitemOpen
  \bibfield  {author} {\bibinfo {author} {\bibfnamefont {J.~I.}\ \bibnamefont
  {Cirac}}, \bibinfo {author} {\bibfnamefont {D.}~\bibnamefont {Perez-Garcia}},
  \bibinfo {author} {\bibfnamefont {N.}~\bibnamefont {Schuch}},\ and\ \bibinfo
  {author} {\bibfnamefont {F.}~\bibnamefont {Verstraete}},\ }\bibfield  {title}
  {\bibinfo {title} {Matrix product unitaries: structure, symmetries, and
  topological invariants},\ }\href {https://doi.org/10.1088/1742-5468/aa7e55}
  {\bibfield  {journal} {\bibinfo  {journal} {Journal of Statistical Mechanics:
  Theory and Experiment}\ }\textbf {\bibinfo {volume} {2017}},\ \bibinfo
  {pages} {083105} (\bibinfo {year} {2017})}\BibitemShut {NoStop}%
\bibitem [{\citenamefont {Luchnikov}\ \emph {et~al.}(2021)\citenamefont
  {Luchnikov}, \citenamefont {Krechetov},\ and\ \citenamefont
  {Filippov}}]{luchnikov-2021}%
  \BibitemOpen
  \bibfield  {author} {\bibinfo {author} {\bibfnamefont {I.~A.}\ \bibnamefont
  {Luchnikov}}, \bibinfo {author} {\bibfnamefont {M.~E.}\ \bibnamefont
  {Krechetov}},\ and\ \bibinfo {author} {\bibfnamefont {S.~N.}\ \bibnamefont
  {Filippov}},\ }\bibfield  {title} {\bibinfo {title} {Riemannian geometry and
  automatic differentiation for optimization problems of quantum physics and
  quantum technologies},\ }\href {https://doi.org/10.1088/1367-2630/ac0b02}
  {\bibfield  {journal} {\bibinfo  {journal} {New Journal of Physics}\ }\textbf
  {\bibinfo {volume} {23}},\ \bibinfo {pages} {073006} (\bibinfo {year}
  {2021})}\BibitemShut {NoStop}%
\bibitem [{\citenamefont {Haghshenas}\ \emph {et~al.}(2022)\citenamefont
  {Haghshenas}, \citenamefont {Gray}, \citenamefont {Potter},\ and\
  \citenamefont {Chan}}]{haghshenas-2022}%
  \BibitemOpen
  \bibfield  {author} {\bibinfo {author} {\bibfnamefont {R.}~\bibnamefont
  {Haghshenas}}, \bibinfo {author} {\bibfnamefont {J.}~\bibnamefont {Gray}},
  \bibinfo {author} {\bibfnamefont {A.~C.}\ \bibnamefont {Potter}},\ and\
  \bibinfo {author} {\bibfnamefont {G.~K.-L.}\ \bibnamefont {Chan}},\
  }\bibfield  {title} {\bibinfo {title} {Variational power of quantum circuit
  tensor networks},\ }\href {https://doi.org/10.1103/PhysRevX.12.011047}
  {\bibfield  {journal} {\bibinfo  {journal} {Phys. Rev. X}\ }\textbf {\bibinfo
  {volume} {12}},\ \bibinfo {pages} {011047} (\bibinfo {year}
  {2022})}\BibitemShut {NoStop}%
\bibitem [{\citenamefont {Rudolph}\ \emph {et~al.}(2022)\citenamefont
  {Rudolph}, \citenamefont {Chen}, \citenamefont {Miller}, \citenamefont
  {Acharya},\ and\ \citenamefont {Perdomo-Ortiz}}]{rudolph-2022}%
  \BibitemOpen
  \bibfield  {author} {\bibinfo {author} {\bibfnamefont {M.~S.}\ \bibnamefont
  {Rudolph}}, \bibinfo {author} {\bibfnamefont {J.}~\bibnamefont {Chen}},
  \bibinfo {author} {\bibfnamefont {J.}~\bibnamefont {Miller}}, \bibinfo
  {author} {\bibfnamefont {A.}~\bibnamefont {Acharya}},\ and\ \bibinfo {author}
  {\bibfnamefont {A.}~\bibnamefont {Perdomo-Ortiz}},\ }\href
  {https://doi.org/10.48550/ARXIV.2209.00595} {\bibinfo {title} {Decomposition
  of matrix product states into shallow quantum circuits,
  \href{https://arxiv.org/abs/2209.00595}{https://arxiv.org/abs/2209.00595}}}
  (\bibinfo {year} {2022})\BibitemShut {NoStop}%
\bibitem [{\citenamefont {García-Pérez}\ \emph {et~al.}(2022)\citenamefont
  {García-Pérez}, \citenamefont {Borrelli}, \citenamefont {Leahy},
  \citenamefont {Malmi}, \citenamefont {Maniscalco}, \citenamefont {Rossi},
  \citenamefont {Sokolov},\ and\ \citenamefont {Cavalcanti}}]{vilma}%
  \BibitemOpen
  \bibfield  {author} {\bibinfo {author} {\bibfnamefont {G.}~\bibnamefont
  {García-Pérez}}, \bibinfo {author} {\bibfnamefont {E.-M.}\ \bibnamefont
  {Borrelli}}, \bibinfo {author} {\bibfnamefont {M.}~\bibnamefont {Leahy}},
  \bibinfo {author} {\bibfnamefont {J.}~\bibnamefont {Malmi}}, \bibinfo
  {author} {\bibfnamefont {S.}~\bibnamefont {Maniscalco}}, \bibinfo {author}
  {\bibfnamefont {M.~A.~C.}\ \bibnamefont {Rossi}}, \bibinfo {author}
  {\bibfnamefont {B.}~\bibnamefont {Sokolov}},\ and\ \bibinfo {author}
  {\bibfnamefont {D.}~\bibnamefont {Cavalcanti}},\ }\href
  {https://doi.org/10.48550/ARXIV.2207.01360} {\bibinfo {title} {Virtual linear
  map algorithm for classical boost in near-term quantum computing,
  \href{https://arxiv.org/abs/2207.01360}{https://arxiv.org/abs/2207.01360}}}
  (\bibinfo {year} {2022})\BibitemShut {NoStop}%
\bibitem [{\citenamefont {Werner}\ \emph {et~al.}(2016)\citenamefont {Werner},
  \citenamefont {Jaschke}, \citenamefont {Silvi}, \citenamefont {Kliesch},
  \citenamefont {Calarco}, \citenamefont {Eisert},\ and\ \citenamefont
  {Montangero}}]{werner-2016}%
  \BibitemOpen
  \bibfield  {author} {\bibinfo {author} {\bibfnamefont {A.~H.}\ \bibnamefont
  {Werner}}, \bibinfo {author} {\bibfnamefont {D.}~\bibnamefont {Jaschke}},
  \bibinfo {author} {\bibfnamefont {P.}~\bibnamefont {Silvi}}, \bibinfo
  {author} {\bibfnamefont {M.}~\bibnamefont {Kliesch}}, \bibinfo {author}
  {\bibfnamefont {T.}~\bibnamefont {Calarco}}, \bibinfo {author} {\bibfnamefont
  {J.}~\bibnamefont {Eisert}},\ and\ \bibinfo {author} {\bibfnamefont
  {S.}~\bibnamefont {Montangero}},\ }\bibfield  {title} {\bibinfo {title}
  {Positive tensor network approach for simulating open quantum many-body
  systems},\ }\href {https://doi.org/10.1103/PhysRevLett.116.237201} {\bibfield
   {journal} {\bibinfo  {journal} {Phys. Rev. Lett.}\ }\textbf {\bibinfo
  {volume} {116}},\ \bibinfo {pages} {237201} (\bibinfo {year}
  {2016})}\BibitemShut {NoStop}%
\bibitem [{\citenamefont {Torlai}\ \emph {et~al.}(2020)\citenamefont {Torlai},
  \citenamefont {Wood}, \citenamefont {Acharya}, \citenamefont {Carleo},
  \citenamefont {Carrasquilla},\ and\ \citenamefont {Aolita}}]{torlai-2020}%
  \BibitemOpen
  \bibfield  {author} {\bibinfo {author} {\bibfnamefont {G.}~\bibnamefont
  {Torlai}}, \bibinfo {author} {\bibfnamefont {C.~J.}\ \bibnamefont {Wood}},
  \bibinfo {author} {\bibfnamefont {A.}~\bibnamefont {Acharya}}, \bibinfo
  {author} {\bibfnamefont {G.}~\bibnamefont {Carleo}}, \bibinfo {author}
  {\bibfnamefont {J.}~\bibnamefont {Carrasquilla}},\ and\ \bibinfo {author}
  {\bibfnamefont {L.}~\bibnamefont {Aolita}},\ }\href
  {https://doi.org/10.48550/ARXIV.2006.02424} {\bibinfo {title} {Quantum
  process tomography with unsupervised learning and tensor networks,
  \href{https://arxiv.org/abs/2006.02424}{https://arxiv.org/abs/2006.02424}}}
  (\bibinfo {year} {2020})\BibitemShut {NoStop}%
\bibitem [{\citenamefont {Srinivasan}\ \emph {et~al.}(2021)\citenamefont
  {Srinivasan}, \citenamefont {Adhikary}, \citenamefont {Miller}, \citenamefont
  {Pokharel}, \citenamefont {Rabusseau},\ and\ \citenamefont
  {Boots}}]{srinivasan-2021}%
  \BibitemOpen
  \bibfield  {author} {\bibinfo {author} {\bibfnamefont {S.}~\bibnamefont
  {Srinivasan}}, \bibinfo {author} {\bibfnamefont {S.}~\bibnamefont
  {Adhikary}}, \bibinfo {author} {\bibfnamefont {J.}~\bibnamefont {Miller}},
  \bibinfo {author} {\bibfnamefont {B.}~\bibnamefont {Pokharel}}, \bibinfo
  {author} {\bibfnamefont {G.}~\bibnamefont {Rabusseau}},\ and\ \bibinfo
  {author} {\bibfnamefont {B.}~\bibnamefont {Boots}},\ }\bibfield  {title}
  {\bibinfo {title} {Towards a trace-preserving tensor network representation
  of quantum channels},\ }in\ \href@noop {} {\emph {\bibinfo {booktitle}
  {Second Workshop on Quantum Tensor Networks in Machine Learning, 35th
  Conference on Neural Information Processing Systems (NeurIPS 2021)}}}\
  (\bibinfo {year} {2021})\BibitemShut {NoStop}%
\bibitem [{\citenamefont {Boyd}\ and\ \citenamefont
  {Vandenberghe}(2004)}]{BVbook}%
  \BibitemOpen
  \bibfield  {author} {\bibinfo {author} {\bibfnamefont {S.}~\bibnamefont
  {Boyd}}\ and\ \bibinfo {author} {\bibfnamefont {L.}~\bibnamefont
  {Vandenberghe}},\ }\href {https://doi.org/10.1017/9781316848142} {\emph
  {\bibinfo {title} {Convex Optimization}}}\ (\bibinfo  {publisher} {Cambridge
  University Press},\ \bibinfo {year} {2004})\BibitemShut {NoStop}%
\bibitem [{\citenamefont {Skrzypczyk}\ and\ \citenamefont
  {Cavalcanti}(2023)}]{SCbook}%
  \BibitemOpen
  \bibfield  {author} {\bibinfo {author} {\bibfnamefont {P.}~\bibnamefont
  {Skrzypczyk}}\ and\ \bibinfo {author} {\bibfnamefont {D.}~\bibnamefont
  {Cavalcanti}},\ }\href@noop {} {\emph {\bibinfo {title} {Semidefinite
  Programming in Quantum Information Science}}}\ (\bibinfo  {publisher} {IOP
  Publishing},\ \bibinfo {year} {2023})\BibitemShut {NoStop}%
\bibitem [{\citenamefont {Schollwöck}(2011)}]{schollwock-2011}%
  \BibitemOpen
  \bibfield  {author} {\bibinfo {author} {\bibfnamefont {U.}~\bibnamefont
  {Schollwöck}},\ }\bibfield  {title} {\bibinfo {title} {The density-matrix
  renormalization group in the age of matrix product states},\ }\href
  {https://doi.org/https://doi.org/10.1016/j.aop.2010.09.012} {\bibfield
  {journal} {\bibinfo  {journal} {Annals of Physics}\ }\textbf {\bibinfo
  {volume} {326}},\ \bibinfo {pages} {96} (\bibinfo {year} {2011})}\BibitemShut
  {NoStop}%
\bibitem [{\citenamefont {Kandala}\ \emph {et~al.}(2017)\citenamefont
  {Kandala}, \citenamefont {Mezzacapo}, \citenamefont {Temme}, \citenamefont
  {Takita}, \citenamefont {Brink}, \citenamefont {Chow},\ and\ \citenamefont
  {Gambetta}}]{kandala-2017}%
  \BibitemOpen
  \bibfield  {author} {\bibinfo {author} {\bibfnamefont {A.}~\bibnamefont
  {Kandala}}, \bibinfo {author} {\bibfnamefont {A.}~\bibnamefont {Mezzacapo}},
  \bibinfo {author} {\bibfnamefont {K.}~\bibnamefont {Temme}}, \bibinfo
  {author} {\bibfnamefont {M.}~\bibnamefont {Takita}}, \bibinfo {author}
  {\bibfnamefont {M.}~\bibnamefont {Brink}}, \bibinfo {author} {\bibfnamefont
  {J.~M.}\ \bibnamefont {Chow}},\ and\ \bibinfo {author} {\bibfnamefont
  {J.~M.}\ \bibnamefont {Gambetta}},\ }\bibfield  {title} {\bibinfo {title}
  {Hardware-efficient variational quantum eigensolver for small molecules and
  quantum magnets},\ }\href@noop {} {\bibfield  {journal} {\bibinfo  {journal}
  {Nature}\ }\textbf {\bibinfo {volume} {549}},\ \bibinfo {pages} {242}
  (\bibinfo {year} {2017})}\BibitemShut {NoStop}%
\bibitem [{\citenamefont {Nam}\ \emph {et~al.}(2020)\citenamefont {Nam},
  \citenamefont {Chen}, \citenamefont {Pisenti}, \citenamefont {Wright},
  \citenamefont {Delaney}, \citenamefont {Maslov}, \citenamefont {Brown},
  \citenamefont {Allen}, \citenamefont {Amini}, \citenamefont {Apisdorf} \emph
  {et~al.}}]{nam-2020}%
  \BibitemOpen
  \bibfield  {author} {\bibinfo {author} {\bibfnamefont {Y.}~\bibnamefont
  {Nam}}, \bibinfo {author} {\bibfnamefont {J.-S.}\ \bibnamefont {Chen}},
  \bibinfo {author} {\bibfnamefont {N.~C.}\ \bibnamefont {Pisenti}}, \bibinfo
  {author} {\bibfnamefont {K.}~\bibnamefont {Wright}}, \bibinfo {author}
  {\bibfnamefont {C.}~\bibnamefont {Delaney}}, \bibinfo {author} {\bibfnamefont
  {D.}~\bibnamefont {Maslov}}, \bibinfo {author} {\bibfnamefont {K.~R.}\
  \bibnamefont {Brown}}, \bibinfo {author} {\bibfnamefont {S.}~\bibnamefont
  {Allen}}, \bibinfo {author} {\bibfnamefont {J.~M.}\ \bibnamefont {Amini}},
  \bibinfo {author} {\bibfnamefont {J.}~\bibnamefont {Apisdorf}}, \emph
  {et~al.},\ }\bibfield  {title} {\bibinfo {title} {Ground-state energy
  estimation of the water molecule on a trapped-ion quantum computer},\
  }\href@noop {} {\bibfield  {journal} {\bibinfo  {journal} {npj Quantum
  Information}\ }\textbf {\bibinfo {volume} {6}},\ \bibinfo {pages} {1}
  (\bibinfo {year} {2020})}\BibitemShut {NoStop}%
\bibitem [{\citenamefont {Endo}\ \emph {et~al.}(2021)\citenamefont {Endo},
  \citenamefont {Cai}, \citenamefont {Benjamin},\ and\ \citenamefont
  {Yuan}}]{endo-2021}%
  \BibitemOpen
  \bibfield  {author} {\bibinfo {author} {\bibfnamefont {S.}~\bibnamefont
  {Endo}}, \bibinfo {author} {\bibfnamefont {Z.}~\bibnamefont {Cai}}, \bibinfo
  {author} {\bibfnamefont {S.~C.}\ \bibnamefont {Benjamin}},\ and\ \bibinfo
  {author} {\bibfnamefont {X.}~\bibnamefont {Yuan}},\ }\bibfield  {title}
  {\bibinfo {title} {Hybrid quantum-classical algorithms and quantum error
  mitigation},\ }\href@noop {} {\bibfield  {journal} {\bibinfo  {journal}
  {Journal of the Physical Society of Japan}\ }\textbf {\bibinfo {volume}
  {90}},\ \bibinfo {pages} {032001} (\bibinfo {year} {2021})}\BibitemShut
  {NoStop}%
\bibitem [{\citenamefont {Suchsland}\ \emph {et~al.}(2021)\citenamefont
  {Suchsland}, \citenamefont {Tacchino}, \citenamefont {Fischer}, \citenamefont
  {Neupert}, \citenamefont {Barkoutsos},\ and\ \citenamefont
  {Tavernelli}}]{suchsland-2021}%
  \BibitemOpen
  \bibfield  {author} {\bibinfo {author} {\bibfnamefont {P.}~\bibnamefont
  {Suchsland}}, \bibinfo {author} {\bibfnamefont {F.}~\bibnamefont {Tacchino}},
  \bibinfo {author} {\bibfnamefont {M.~H.}\ \bibnamefont {Fischer}}, \bibinfo
  {author} {\bibfnamefont {T.}~\bibnamefont {Neupert}}, \bibinfo {author}
  {\bibfnamefont {P.~K.}\ \bibnamefont {Barkoutsos}},\ and\ \bibinfo {author}
  {\bibfnamefont {I.}~\bibnamefont {Tavernelli}},\ }\bibfield  {title}
  {\bibinfo {title} {Algorithmic {E}rror {M}itigation {S}cheme for {C}urrent
  {Q}uantum {P}rocessors},\ }\href {https://doi.org/10.22331/q-2021-07-01-492}
  {\bibfield  {journal} {\bibinfo  {journal} {{Quantum}}\ }\textbf {\bibinfo
  {volume} {5}},\ \bibinfo {pages} {492} (\bibinfo {year} {2021})}\BibitemShut
  {NoStop}%
\bibitem [{\citenamefont {Bonet-Monroig}\ \emph {et~al.}(2018)\citenamefont
  {Bonet-Monroig}, \citenamefont {Sagastizabal}, \citenamefont {Singh},\ and\
  \citenamefont {O'Brien}}]{bonet-monroig-2018}%
  \BibitemOpen
  \bibfield  {author} {\bibinfo {author} {\bibfnamefont {X.}~\bibnamefont
  {Bonet-Monroig}}, \bibinfo {author} {\bibfnamefont {R.}~\bibnamefont
  {Sagastizabal}}, \bibinfo {author} {\bibfnamefont {M.}~\bibnamefont
  {Singh}},\ and\ \bibinfo {author} {\bibfnamefont {T.~E.}\ \bibnamefont
  {O'Brien}},\ }\bibfield  {title} {\bibinfo {title} {Low-cost error mitigation
  by symmetry verification},\ }\href
  {https://doi.org/10.1103/PhysRevA.98.062339} {\bibfield  {journal} {\bibinfo
  {journal} {Phys. Rev. A}\ }\textbf {\bibinfo {volume} {98}},\ \bibinfo
  {pages} {062339} (\bibinfo {year} {2018})}\BibitemShut {NoStop}%
\bibitem [{\citenamefont {Garc\'{\i}a-P\'erez}\ \emph
  {et~al.}(2021)\citenamefont {Garc\'{\i}a-P\'erez}, \citenamefont {Rossi},
  \citenamefont {Sokolov}, \citenamefont {Tacchino}, \citenamefont
  {Barkoutsos}, \citenamefont {Mazzola}, \citenamefont {Tavernelli},\ and\
  \citenamefont {Maniscalco}}]{garcia-perez-2021}%
  \BibitemOpen
  \bibfield  {author} {\bibinfo {author} {\bibfnamefont {G.}~\bibnamefont
  {Garc\'{\i}a-P\'erez}}, \bibinfo {author} {\bibfnamefont {M.~A.}\
  \bibnamefont {Rossi}}, \bibinfo {author} {\bibfnamefont {B.}~\bibnamefont
  {Sokolov}}, \bibinfo {author} {\bibfnamefont {F.}~\bibnamefont {Tacchino}},
  \bibinfo {author} {\bibfnamefont {P.~K.}\ \bibnamefont {Barkoutsos}},
  \bibinfo {author} {\bibfnamefont {G.}~\bibnamefont {Mazzola}}, \bibinfo
  {author} {\bibfnamefont {I.}~\bibnamefont {Tavernelli}},\ and\ \bibinfo
  {author} {\bibfnamefont {S.}~\bibnamefont {Maniscalco}},\ }\bibfield  {title}
  {\bibinfo {title} {Learning to measure: Adaptive informationally complete
  generalized measurements for quantum algorithms},\ }\href
  {https://doi.org/10.1103/PRXQuantum.2.040342} {\bibfield  {journal} {\bibinfo
   {journal} {PRX Quantum}\ }\textbf {\bibinfo {volume} {2}},\ \bibinfo {pages}
  {040342} (\bibinfo {year} {2021})}\BibitemShut {NoStop}%
\bibitem [{\citenamefont {Quek}\ \emph {et~al.}(2022)\citenamefont {Quek},
  \citenamefont {França}, \citenamefont {Khatri}, \citenamefont {Meyer},\ and\
  \citenamefont {Eisert}}]{quek-2022}%
  \BibitemOpen
  \bibfield  {author} {\bibinfo {author} {\bibfnamefont {Y.}~\bibnamefont
  {Quek}}, \bibinfo {author} {\bibfnamefont {D.~S.}\ \bibnamefont {França}},
  \bibinfo {author} {\bibfnamefont {S.}~\bibnamefont {Khatri}}, \bibinfo
  {author} {\bibfnamefont {J.~J.}\ \bibnamefont {Meyer}},\ and\ \bibinfo
  {author} {\bibfnamefont {J.}~\bibnamefont {Eisert}},\ }\href
  {https://doi.org/10.48550/ARXIV.2210.11505} {\bibinfo {title} {Exponentially
  tighter bounds on limitations of quantum error mitigation,
  \href{https://arxiv.org/abs/2210.11505}{https://arxiv.org/abs/2210.11505}}}
  (\bibinfo {year} {2022})\BibitemShut {NoStop}%
\bibitem [{\citenamefont {Holevo}(2019)}]{holevo-2019}%
  \BibitemOpen
  \bibfield  {author} {\bibinfo {author} {\bibfnamefont {A.~S.}\ \bibnamefont
  {Holevo}},\ }\href@noop {} {\emph {\bibinfo {title} {Quantum systems,
  channels, information}}}\ (\bibinfo  {publisher} {de Gruyter},\ \bibinfo
  {year} {2019})\BibitemShut {NoStop}%
\bibitem [{\citenamefont {Orús}(2014)}]{orus-2014}%
  \BibitemOpen
  \bibfield  {author} {\bibinfo {author} {\bibfnamefont {R.}~\bibnamefont
  {Orús}},\ }\bibfield  {title} {\bibinfo {title} {A practical introduction to
  tensor networks: Matrix product states and projected entangled pair states},\
  }\href {https://doi.org/https://doi.org/10.1016/j.aop.2014.06.013} {\bibfield
   {journal} {\bibinfo  {journal} {Annals of Physics}\ }\textbf {\bibinfo
  {volume} {349}},\ \bibinfo {pages} {117} (\bibinfo {year}
  {2014})}\BibitemShut {NoStop}%
\bibitem [{\citenamefont {Cirac}\ \emph {et~al.}(2021)\citenamefont {Cirac},
  \citenamefont {P\'erez-Garc\'{\i}a}, \citenamefont {Schuch},\ and\
  \citenamefont {Verstraete}}]{cirac-2021}%
  \BibitemOpen
  \bibfield  {author} {\bibinfo {author} {\bibfnamefont {J.~I.}\ \bibnamefont
  {Cirac}}, \bibinfo {author} {\bibfnamefont {D.}~\bibnamefont
  {P\'erez-Garc\'{\i}a}}, \bibinfo {author} {\bibfnamefont {N.}~\bibnamefont
  {Schuch}},\ and\ \bibinfo {author} {\bibfnamefont {F.}~\bibnamefont
  {Verstraete}},\ }\bibfield  {title} {\bibinfo {title} {Matrix product states
  and projected entangled pair states: Concepts, symmetries, theorems},\ }\href
  {https://doi.org/10.1103/RevModPhys.93.045003} {\bibfield  {journal}
  {\bibinfo  {journal} {Rev. Mod. Phys.}\ }\textbf {\bibinfo {volume} {93}},\
  \bibinfo {pages} {045003} (\bibinfo {year} {2021})}\BibitemShut {NoStop}%
\bibitem [{\citenamefont {Chruściński}(2022)}]{chruscinski-2022}%
  \BibitemOpen
  \bibfield  {author} {\bibinfo {author} {\bibfnamefont {D.}~\bibnamefont
  {Chruściński}},\ }\bibfield  {title} {\bibinfo {title} {Dynamical maps
  beyond {M}arkovian regime},\ }\href
  {https://doi.org/https://doi.org/10.1016/j.physrep.2022.09.003} {\bibfield
  {journal} {\bibinfo  {journal} {Physics Reports}\ }\textbf {\bibinfo {volume}
  {992}},\ \bibinfo {pages} {1} (\bibinfo {year} {2022})}\BibitemShut {NoStop}%
\bibitem [{\citenamefont {Verstraete}\ \emph {et~al.}(2004)\citenamefont
  {Verstraete}, \citenamefont {Garc\'{\i}a-Ripoll},\ and\ \citenamefont
  {Cirac}}]{verstraete-2004}%
  \BibitemOpen
  \bibfield  {author} {\bibinfo {author} {\bibfnamefont {F.}~\bibnamefont
  {Verstraete}}, \bibinfo {author} {\bibfnamefont {J.~J.}\ \bibnamefont
  {Garc\'{\i}a-Ripoll}},\ and\ \bibinfo {author} {\bibfnamefont {J.~I.}\
  \bibnamefont {Cirac}},\ }\bibfield  {title} {\bibinfo {title} {Matrix product
  density operators: Simulation of finite-temperature and dissipative
  systems},\ }\href {https://doi.org/10.1103/PhysRevLett.93.207204} {\bibfield
  {journal} {\bibinfo  {journal} {Phys. Rev. Lett.}\ }\textbf {\bibinfo
  {volume} {93}},\ \bibinfo {pages} {207204} (\bibinfo {year}
  {2004})}\BibitemShut {NoStop}%
\bibitem [{\citenamefont {Zwolak}\ and\ \citenamefont
  {Vidal}(2004)}]{zwolak-2004}%
  \BibitemOpen
  \bibfield  {author} {\bibinfo {author} {\bibfnamefont {M.}~\bibnamefont
  {Zwolak}}\ and\ \bibinfo {author} {\bibfnamefont {G.}~\bibnamefont {Vidal}},\
  }\bibfield  {title} {\bibinfo {title} {Mixed-state dynamics in
  one-dimensional quantum lattice systems: A time-dependent superoperator
  renormalization algorithm},\ }\href
  {https://doi.org/10.1103/PhysRevLett.93.207205} {\bibfield  {journal}
  {\bibinfo  {journal} {Phys. Rev. Lett.}\ }\textbf {\bibinfo {volume} {93}},\
  \bibinfo {pages} {207205} (\bibinfo {year} {2004})}\BibitemShut {NoStop}%
\bibitem [{\citenamefont {Luchnikov}\ \emph {et~al.}(2020)\citenamefont
  {Luchnikov}, \citenamefont {Vintskevich}, \citenamefont {Grigoriev},\ and\
  \citenamefont {Filippov}}]{luchnikov-2020}%
  \BibitemOpen
  \bibfield  {author} {\bibinfo {author} {\bibfnamefont {I.~A.}\ \bibnamefont
  {Luchnikov}}, \bibinfo {author} {\bibfnamefont {S.~V.}\ \bibnamefont
  {Vintskevich}}, \bibinfo {author} {\bibfnamefont {D.~A.}\ \bibnamefont
  {Grigoriev}},\ and\ \bibinfo {author} {\bibfnamefont {S.~N.}\ \bibnamefont
  {Filippov}},\ }\bibfield  {title} {\bibinfo {title} {Machine learning
  non-markovian quantum dynamics},\ }\href
  {https://doi.org/10.1103/PhysRevLett.124.140502} {\bibfield  {journal}
  {\bibinfo  {journal} {Phys. Rev. Lett.}\ }\textbf {\bibinfo {volume} {124}},\
  \bibinfo {pages} {140502} (\bibinfo {year} {2020})}\BibitemShut {NoStop}%
\bibitem [{\citenamefont {Cenedese}\ \emph {et~al.}(2022)\citenamefont
  {Cenedese}, \citenamefont {Benenti},\ and\ \citenamefont
  {Bondani}}]{cenedese-2022}%
  \BibitemOpen
  \bibfield  {author} {\bibinfo {author} {\bibfnamefont {G.}~\bibnamefont
  {Cenedese}}, \bibinfo {author} {\bibfnamefont {G.}~\bibnamefont {Benenti}},\
  and\ \bibinfo {author} {\bibfnamefont {M.}~\bibnamefont {Bondani}},\ }\href
  {https://doi.org/10.48550/ARXIV.2211.16114} {\bibinfo {title} {Correcting
  coherent errors by random operation on actual quantum hardware,
  \href{https://arxiv.org/abs/2211.16114}{https://arxiv.org/abs/2211.16114}}}
  (\bibinfo {year} {2022})\BibitemShut {NoStop}%
\bibitem [{\citenamefont {Berg}\ \emph {et~al.}(2022)\citenamefont {Berg},
  \citenamefont {Minev}, \citenamefont {Kandala},\ and\ \citenamefont
  {Temme}}]{berg-2022}%
  \BibitemOpen
  \bibfield  {author} {\bibinfo {author} {\bibfnamefont {E.~v.~d.}\
  \bibnamefont {Berg}}, \bibinfo {author} {\bibfnamefont {Z.~K.}\ \bibnamefont
  {Minev}}, \bibinfo {author} {\bibfnamefont {A.}~\bibnamefont {Kandala}},\
  and\ \bibinfo {author} {\bibfnamefont {K.}~\bibnamefont {Temme}},\ }\href
  {https://doi.org/10.48550/ARXIV.2201.09866} {\bibinfo {title} {Probabilistic
  error cancellation with sparse pauli-lindblad models on noisy quantum
  processors,
  \href{https://arxiv.org/abs/2201.09866}{https://arxiv.org/abs/2201.09866}}}
  (\bibinfo {year} {2022})\BibitemShut {NoStop}%
\bibitem [{\citenamefont {Holmes}\ \emph {et~al.}(2022)\citenamefont {Holmes},
  \citenamefont {Sharma}, \citenamefont {Cerezo},\ and\ \citenamefont
  {Coles}}]{holmes-2022}%
  \BibitemOpen
  \bibfield  {author} {\bibinfo {author} {\bibfnamefont {Z.}~\bibnamefont
  {Holmes}}, \bibinfo {author} {\bibfnamefont {K.}~\bibnamefont {Sharma}},
  \bibinfo {author} {\bibfnamefont {M.}~\bibnamefont {Cerezo}},\ and\ \bibinfo
  {author} {\bibfnamefont {P.~J.}\ \bibnamefont {Coles}},\ }\bibfield  {title}
  {\bibinfo {title} {Connecting ansatz expressibility to gradient magnitudes
  and barren plateaus},\ }\href {https://doi.org/10.1103/PRXQuantum.3.010313}
  {\bibfield  {journal} {\bibinfo  {journal} {PRX Quantum}\ }\textbf {\bibinfo
  {volume} {3}},\ \bibinfo {pages} {010313} (\bibinfo {year}
  {2022})}\BibitemShut {NoStop}%
\end{thebibliography}%

\end{document}